\documentclass[a4paper,UKenglish,cleveref, autoref, thm-restate]{lipics-v2021}

\usepackage{xcolor}
\usepackage{graphicx} 
\usepackage{algorithm}
\usepackage{algpseudocode}
\usepackage{amsthm,amssymb}
\usepackage{amsmath} 
\usepackage{float}
\usepackage{comment}

\usepackage{graphicx}          
\usepackage{tikz}              
\usepackage{circuitikz}        
\usetikzlibrary{arrows.meta}   

\pdfoutput=1 
\hideLIPIcs  

\title{An Exponential Separation between Deterministic CDCL and DPLL Solvers}

\author{Sahil {Samar}}{School of Computer Science, Georgia Institute of Technology, USA}{ssamar6@gatech.edu}{0009-0000-2395-7823}{}

\author{Marc Vinyals}{School of Computer Science, University of Auckland, New Zealand}{marc.vinyals@auckland.ac.nz}{0000-0002-1487-445X}{}

\author{Vijay Ganesh}{School of Computer Science, Georgia Institute of Technology, USA}{vganesh@gatech.edu}{0000-0002-6029-2047}{}

\authorrunning{S. Samar, M. Vinyals, and V.Ganesh} 

\Copyright{Sahil Samar, Marc Vinyals, and Vijay Ganesh} 

\ccsdesc[100]{Theory of computation~Logic} 

\keywords{SAT solvers, CDCL, Proof Systems, VSIDS} 

\category{} 

\acknowledgements{We would like to thank Vincent Liew for his help in verifying the work in this paper. Additionally, we want to thank the anonymous reviewers for their valuable feedback.}

\nolinenumbers 

\EventEditors{Alexey Ignatiev and Stefan Szeider}
\EventNoEds{2}
\EventLongTitle{29th International Conference on Theory and Applications of Satisfiability Testing (SAT 2026)}
\EventShortTitle{SAT 2026}
\EventAcronym{SAT}
\EventYear{2026}
\EventDate{July 20--23, 2026}
\EventLocation{Lisbon, Portugal}
\EventLogo{}
\SeriesVolume{377}
\ArticleNo{40}

\newcommand{\IGsetup}{%
\tikzset{
  every node/.style={font=\normalsize},
  lit/.style={circle, draw, minimum size=2.5cm, inner sep=0pt, align=center},
  dot/.style={circle, draw, minimum size=0.8cm, inner sep=0pt, align=center},
  arr/.style={->, >=Stealth, line width=0.6pt, shorten >=2pt, shorten <=2pt}
}}

\begin{document}

\maketitle 
\begin{abstract}
We prove that there exists a deterministic configuration of Conflict-Driven Clause Learning (CDCL) SAT solvers using a variant of the VSIDS branching heuristic that solves instances of the Ordering Principle (OP) CNF formulas in time polynomial in $n$, where $n$ is the number of variables in such formulas. Since tree-like resolution is known to have an exponential lower bound for proof size for OP formulas, it follows that CDCL under this configuration has an exponential separation with any solver that is polynomially equivalent to tree-like resolution and therefore any configuration of DPLL SAT solvers.
\end{abstract} 

\section{Introduction}
Although SAT is the quintessential NP-complete problem, modern SAT solvers routinely handle real-world instances with millions of variables and clauses~\cite{unreasonable}. Much of this success is attributable to clause learning~\cite{grasp} and conflict-driven branching techniques such as VSIDS~\cite{chaff} and LRB~\cite{maplesat}, reinforced by extensive benchmarking and ablation studies~\cite{katebi2011empirical,elffers2018seeking}. A natural question is whether this empirical success has a theoretical explanation.

The best theoretical evidence to date comes from equivalences to proof systems: assuming optimal heuristics, DPLL is polynomially equivalent to tree-like resolution, while CDCL is polynomially equivalent to general resolution~\cite{simres,manyrestarts}. Since there exist formulas that have polynomially long resolution proofs but require exponentially long tree-like resolution proofs, as a result of the separation between proof systems we obtain an exponential separation between algorithms.

But do these results explain the efficiency of \textit{practical} solvers? The equivalence and separation results only apply to nondeterministic algorithms using optimal heuristics given by an unrealistic oracle. The fact that a short proof exists does not mean that a deterministic algorithm is necessarily able to find it. Whether that is possible is called the automatability problem~\cite{DBLP:journals/siamcomp/BonetPR00}, and it is an active area of study. In the case of resolution, a breakthrough result proved that no deterministic algorithm can always find short resolution proofs unless $\mathsf{P}=\mathsf{NP}$~\cite{DBLP:journals/jacm/AtseriasM20}.

More precisely, only one heuristic in CDCL needs to be nondeterministic for it to simulate resolution: the branching heuristic, which determines the next variable to branch on. A randomized heuristic is enough for CDCL to simulate bounded-width resolution \cite{manyrestarts}, and if the heuristic is fixed and deterministic then it is known unconditionally---independently of whether $\mathsf{P}=\mathsf{NP}$---that CDCL cannot always find short resolution proofs when the branching heuristic is VSIDS-like~\cite{pitfall} or ordered~\cite{DBLP:journals/siamcomp/MullPR22}.

This leaves open the central question we address: is there an exponential separation between DPLL and CDCL when CDCL uses a practical, fully deterministic branching heuristic?

Two families of formulas that exponentially separate tree-like and general resolution, and therefore are candidates to answer our question, are pebbling formulas and the Ordering Principle. The first family of formulas has been studied in a proof system more closely resembling CDCL~\cite{elffers2016trade}, but still assuming an optimal decision heuristic. The second family is what concerns us.

The complexity of resolution proofs of the ordering principle has been widely studied. The formulas were introduced as a \emph{tricky} family that could be solved using certain symmetry breaking properties but appeared hard for resolution~\cite{krishnamurthy1985short}. They were later shown to have in fact short resolution proofs~\cite{staalmarck1996short} but to require exponentially long tree-like resolution proofs~\cite{bonet-galesi}. Furthermore, a modified version of these formulas was used to exponentially separate regular from general resolution~\cite{alekhnovich2007exponential}.

What makes these formulas tricky is that any resolution proofs require learning clauses containing many variables, which is usually an indicator of hardness. In a well-defined sense, these formulas are just at the boundary of formulas that are easy versus hard for resolution: a formula over $n$ variables that requires clauses containing $\Omega(\sqrt{n})$ variables to prove, such as the Ordering Principle, requires exponentially long tree-like resolution proofs, and a formula whose proof requires clauses containing $\Omega(n^{1/2+\epsilon})$ variables requires exponentially long resolution proofs. This makes the Ordering Principle a very interesting case study for the capabilities of any resolution-based algorithm. Adding to the mystery, empirical work shows that the Glucose SAT solver with VSIDS decay factor $0.6$ solves OP instances in polynomial time, but not with a decay factor of $0.95$~\cite{elffers2018seeking}.

In this paper, we prove that CDCL with VSIDS solves the Ordering Principle in polynomial time. To our knowledge, this is the first result showing a polynomial upper bound for a completely deterministic configuration of CDCL solvers, simultaneously exponentially separating it from all configurations of DPLL SAT solvers. To be more precise, we use VSIDS with decay factor at most $1/2$ (which is equivalent to a variant of VMTF), fixed phase value selection, restarts after every conflict, no clause deletion, and 1UIP clause learning. 

Specifying a deterministic CDCL configuration requires a tie-breaking rule for VSIDS, and any such rule must appeal to some ordering on variables. The natural source of this ordering is the DIMACS representation of the formula, which is what real solvers receive as input. Our analysis assumes a column-major variable ordering; to extend the result to arbitrary DIMACS representations, we equip the solver with a deterministic polynomial time pre-processor that recovers this ordering, making the solver's behavior invariant to the choice of representation. This invariance is what licenses our stating the results in terms of abstract OP formulas: under it, analyzing the solver on the abstract formula is equivalent to analyzing it on every DIMACS representation.

An immediate corollary of our result, given that there is an exponential lower bound on the size of tree-like resolution proofs of the Ordering Principle, and that the DPLL algorithm is captured by tree-like resolution, is that there is an exponential separation between deterministic CDCL and nondeterministic DPLL.
\section{Preliminaries}

\subsection{Ordering Principle}

\begin{align*}
A_n &= \bigwedge_{\substack{1 \le i,j,k \le n \\ i \ne j, k \ne i, j \ne k}} (\neg P_{i,j} \lor \neg P_{j,k} \lor P_{i,k}), &&(\text{transitivity)}\\
B_n &= \bigwedge_{\substack{1 \le i,j \le n \\ i \ne j}} (\neg P_{i,j} \lor \neg P_{j,i}), && \text{(antisymmetry)}\\
D_n &= \bigwedge_{1 \le j \le n} \bigvee_{\substack{1 \le i \le n \\ i \ne j}} P_{i,j} && (\text{non-minimality})\\
A(i, j, k) &= \neg P_{i,j} \lor \neg P_{j,k} \lor P_{i,k} && (\text{clause in $A_n$})\\
B(i, j) &= \neg P_{i,j} \lor \neg P_{j,i} && (\text{clause in $B_n$})\\
D(j) &= \bigvee_{\substack{1 \le i \le n \\ i \ne j}} P_{i,j} && (\text{clause in $D_n$})
\end{align*}
\noindent An Ordering Principle instance $OP_n$ is defined as $A_n \land B_n \land D_n$. In particular, the variables in the formula are $P_{i, j}$ s.t. $1 \leq i, j \leq n$ and $i \neq j$. For a variable $P_{i, j}$, we often will call $i$ the row and $j$ the column. An $OP_n$ instance can be thought of as describing $n$ elements that are ordered in some way where $P_{i, j}$ encodes that element $i$ is smaller than element $j$, and such that the elements satisfy transitivity and antisymmetry conditions, but also with the property that no element is minimal (and thus, the formula is UNSAT).

\subsection{VSIDS}

First introduced by the authors of the Chaff solver~\cite{chaff}, VSIDS (Variable State Independent Decaying Sum) is a dynamic branching heuristic for SAT solvers that, upon learning a conflict clause, assigns a score to each variable that appears in the conflict clause (or some variation of the implication graph), decays the scores of all variables at regular intervals, and then branches on the highest scoring variable not currently on the assignment trail. There have been many variants of the heuristic since its first introduction. 

The variant of VSIDS that we consider is identical to that used in the first version of MiniSAT~\cite{minisat}, i.e., it scores \textit{variables} not literals, bumps the variables in the \textit{final} learned clause, rather than all or some subset of clauses involved in conflict analysis, and decays all variable scores by a multiplicative factor \textit{after each conflict}. There is a single hyperparameter for the heuristic, which is the decay factor $\delta$. The scores are all initialized to $0$, and updated after each conflict. 

Let the score of variable $x$ after conflict $t$ be $q(x, t)$. Then, the update formula is:
$$q(x, t) = b(x, t) + \delta q(x, t - 1)$$
where $b(x, t) = 1$ if variable $x$ participated in conflict $t$ and $b(x, t) = 0$ otherwise. A variable ``participated in a conflict'' if it was a part of the resulting learned clause of the conflict.

It is also well known that VSIDS with decay factor at most $1/2$ is equivalent to the Variable Move to Front (VMTF) branching heuristic~\cite{Rya04Efficient,DBLP:conf/sat/BiereF15}. Thus, we can replace VSIDS with VMTF for this proof, and the argument would work the same.

\subsection{CDCL Configuration}
\label{sec:configuration}

The SAT solver configuration we analyze in this paper is the following:
\begin{itemize}
    \item VSIDS Branching heuristic (as described in the previous section) with decay factor $\delta \leq 1/2$
    \item Fixed phase value selection: when making decisions, always assign variables to false.
    \item Restart after every conflict
    \item No clause deletion scheme
    \item 1UIP Clause Learning scheme
\end{itemize}

We denote the database of learned clauses by $\Gamma$. We assume that at any point during the solver run, $\Gamma$ contains all of the clauses the solver learned up to that point in the same order in which they were learned. 

We will briefly discuss the rationale behind the configuration. Our proof heavily uses the fact that the decay factor is at most $1/2$, which is perhaps not surprising in view of how an aggressive decay factor has also been experimentally observed to be required to solve instances of the Ordering Principle (as discussed in the introduction). Most assumptions we make about the configuration are inspired by practice or fairly standard in theoretical papers, except perhaps fixed phase value selection. Frequent restarts and no clause deletions are required in the existing proofs that nondeterministic CDCL simulates resolution~\cite{simres} and CDCL with randomized branching simulates bounded-width resolution~\cite{manyrestarts}, while 1UIP is the de-facto standard clause learning heuristic in modern CDCL SAT solvers. We want to point out that the separation proved in this paper may still hold without restarts; this is left to future work.

The only detail left to specify for the solver is how we break ties when there is more than one variable with the highest VSIDS score. We define the tie-break rule as follows: Let the variables $P_{i, j}$ of the OP instance be in column-major order, i.e. $$P_{2, 1}, ..., P_{n, 1}, P_{1, 2}, ..., P_{n, 2}, ..., P_{1, n}, ..., P_{n - 1, n}$$
When branching, if there is a tie for the highest VSIDS score, we break the tie based on this ordering (i.e. the variable chosen is the one with the highest score and whichever comes first in the above ordering).

The tie-breaking rule arises naturally from treating the solver's input as a DIMACS representation rather than an abstract propositional formula, a perspective we develop in Section~\ref{sec:dimacs} as the appropriate setting for analyzing deterministic solvers. Under this view, the variable ordering in the representation determines the tie-breaking rule. To make the solver's behavior invariant to the choice of representation, Section~\ref{sec:preprocessing} presents a polynomial time pre-processing algorithm that makes the solver behave identically on all valid DIMACS representations of the same formula. All results in Section~\ref{sec:main} assume this pre-processor is applied.

\subsection{DIMACS Representation}
\label{sec:dimacs}

In practice, SAT solvers take in DIMACS representations of CNF boolean formulas as input. DIMACS representations map each variable to a unique number, encode each clause as a sequence of numbers corresponding to the variables in the clause, and list each clause sequentially. Thus, a given DIMACS representation gives a formula additional structure in 3 main ways: an ordering of the variables, an ordering of the clauses, and an ordering of variables within clauses. When studying the complexity of deterministic SAT solvers, it makes sense to consider this structure that SAT solvers have access to in practice. Thus, we propose that the theoretical treatment of deterministic SAT solvers should have the input be DIMACS representations rather than arbitrary propositional formulas. Then, the deterministic solver can be analyzed with the structure that those representations impose on the formulas. In many cases, simply changing the representation of a formula can have a significant impact on the solver behavior~\cite{audemard2008experimenting}, and so we find it natural to treat solvers in this way.

For the purposes of this paper, we are primarily concerned with the VSIDS branching heuristic, and thus the aspect of the DIMACS representation that we care about is the ordering of the variables. In particular, when solving OP instances, if VSIDS scores are tied we wish to choose variables in column-major order. So, the DIMACS representation that we consider maps the variables by enumerating them in column-major order (i.e. assign $P_{2, 1}$ to $1$, $P_{3, 1}$ to $2$, and so on); any ordering of clauses and any ordering of variables within clauses is allowed. Given this representation, the column-major tie-breaking rule is simply the rule of choosing the lowest index variable -- which is the standard rule used in practice by solvers like Cadical or Kissat~\cite{DBLP:conf/sat/BiereF15}. 
Next, we discuss a way to recover the column-major variable ordering from an arbitrary DIMACS representation of OP.

\subsection{Pre-processing}
\label{sec:preprocessing}
For some classes of formulas, a polynomial time pre-processing algorithm can (using only the inherent structure of the formulas) derive enough information such that the solver behaves the same way regardless of DIMACS representation. We describe such a pre-processing algorithm for recovering the column-major variable ordering for the Ordering Principle.

Consider an arbitrary CNF formula $F$ (not necessarily an OP formula). Below we give a polytime pre-processor $P$ that takes as input $F$ and produces as output a variable ordering for $F$ with the following property: if $F$ happens to be an OP formula, then $P$ constructs a valid labeling of its variables and returns the column-major order with respect to that labeling; otherwise, $P$ returns an arbitrary order. 

\begin{enumerate}
    \item First, sort the clauses in ascending order by the number of variables in them. 
    \item Next, let the number of total variables be $N$. Solve for $n \cdot (n - 1) = N$. If $n$ is not an integer, return an arbitrary order. Otherwise, proceed. 
    \item Now, consider the $n$ largest clauses. If there are less than $n$ clauses, then return an arbitrary order. For a valid OP instance, these are the non-minimality clauses; each one corresponds to a distinct column. Prescribe to each of those $n$ clauses a unique index $j$, $1 \leq j \leq n$. Assign column index $j$ to all variables appearing in clause $j$. For an OP instance, each variable in the formula appears in exactly one of these clauses (if not, return an arbitrary order). Now, every variable has a valid column index assigned. 
    \item Next, consider the $n \cdot (n - 1) / 2$ smallest clauses. If there are not enough clauses, return an arbitrary order. For a valid OP instance, these are the antisymmetry clauses; furthermore, each variable in the formula should appear in exactly one of these clauses. Consider one of these clauses, which is of the form $\neg x_{i_1, j} \lor \neg x_{i_2, k}$, where $i_1$ and $i_2$ are the respective unknown row indices and $j$ and $k$ are the known column indices from the previous step. Now, we assign $i_1$ to be $k$ and $i_2$ to be $j$. We iterate over all of these clauses and repeat this step. If any of these clauses has more than two variables or there is a collision, return an arbitrary order. Now, for a valid OP instance, every variable has both a valid row and valid column index assigned. 
    \item Finally, now that each variable has a valid row and valid column index, we return the column-major ordering of the variables based on that labeling.
\end{enumerate}

Augmenting the CDCL solver with this pre-processor makes its behavior invariant to the choice of DIMACS representation for OP formulas: on any two representations of the same OP instance, the solver executes identically. This invariance is what justifies stating the results in Section~\ref{sec:main} in terms of abstract OP formulas rather than specific DIMACS representations. More precisely, once a deterministic solver is invariant to representation, analyzing it on the abstract formula is equivalent to analyzing it on any (and hence every) DIMACS representation of that formula.

\section{Main Result}
\label{sec:main}
We show that CDCL can determine the unsatisfiability of the Ordering Principle in polynomial time by a direct analysis of the behavior of the algorithm. That is, for each clause that the algorithm learns we show which decisions and unit propagations the solver does until reaching a conflict, building an implication graph along the way. Then we show which clause the conflict analysis heuristic learns when applied to said implication graph.

We first prove a useful invariant about the VSIDS branching heuristic under the CDCL configuration described in Section~\ref{sec:configuration}.

\begin{lemma}[{\bf Focus Lemma}]
Assume that the solver has learned at least one clause. Consider the set $S$ of variables in the most recent learned clause. During a run, when the solver has to branch, it must branch on an unassigned variable in $S$ and assign it to false, before it can branch on any other variables.
\end{lemma}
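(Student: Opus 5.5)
The plan is to compare VSIDS scores directly, using only the update rule $q(x,t) = b(x,t) + \delta q(x,t-1)$ with $\delta \le 1/2$ and the fact that scores start at $0$. Fix the time $t$ of the most recent conflict, and let $S$ be the set of variables in the clause learned at conflict $t$.

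First I will establish two score bounds after conflict $t$:
\begin{itemize}
\item every $x \in S$ has $q(x,t) \ge 1$, since $b(x,t)=1$ and all scores are nonnegative;
\item every $y \notin S$ has $q(y,t) = \delta q(y,t-1) < 1$.
\end{itemize}
For the second bound I would prove by induction on $t$ the auxiliary claim that $q(x,t) \le \sum_{s=0}^{t-1}\delta^s < 1/(1-\delta) \le 2$ for every variable $x$ and every $t$. Then $q(y,t) = \delta q(y,t-1) < \delta \cdot 2 \le 1$, which gives $q(y,t) < 1 \le q(x,t)$ for all $x \in S$ and $y \notin S$. The strictness comes from the geometric sum being finite. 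Every variable in $S$ therefore strictly outscores every variable outside $S$, whatever the tie-breaking rule.

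Next I will argue that scores do not change until the next conflict. VSIDS updates happen only when a clause is learned, so between conflict $t$ and the next conflict the scores are frozen at $q(\cdot,t)$. The solver restarts after every conflict, so it begins with an empty trail and makes decisions, interleaved with unit propagation.

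At each decision point it picks the unassigned variable of highest score. As long as some variable of $S$ is unassigned, that variable beats every variable outside $S$, so the decision is on a variable of $S$. Fixed phase selection means it is assigned false. Hence every decision made while $S$ still has an unassigned variable lies in $S$ and is set to false. Only once all of $S$ is assigned, by decision or propagation, or a conflict has occurred, can a variable outside $S$ be branched on. This is exactly the statement.

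The step needing the most care is the uniform bound $q < 2$ that yields the strict inequality $\delta q(y,t-1) < 1$ at $\delta = 1/2$. The worry is the boundary case where $y$ has been bumped at every previous conflict, giving $q(y,t-1) = \sum_{s<t-1}\delta^s$. This is still strictly below $2$ because the sum is finite, so the argument goes through. I would also note explicitly that a learned clause contains each variable at most once, so $b \in \{0,1\}$.
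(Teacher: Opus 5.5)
Your proposal is correct and follows the same argument as the paper. Variables in the newest learned clause have score at least $1$, while every other variable is bounded by $\delta$ times a finite geometric sum and is therefore strictly below $1$, even at $\delta = 1/2$; restarting after every conflict and fixed phase selection then give the stated branching behaviour. Your explicit induction $q(x,t) \le \sum_{s=0}^{t-1}\delta^s$ and your remark that scores stay frozen between conflicts simply spell out what the paper states more briefly.
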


\begin{proof}
First, observe that the variables in the most recently learned clause have a VSIDS score of at least $1$. The reason is that their scores are increased by a value of $1$ after the clause was learned by definition of VSIDS. Variables not in the most recent learned clause have VSIDS score at most $\sum_{i = 1}^\infty \delta^i = \frac{\delta}{1 - \delta}$. Observe that $\frac{\delta}{1 - \delta} < 1$ for $\delta < \frac{1}{2}$. Furthermore, for any finite $k \in \mathbb{N}$, $\sum_{i = 1}^k \frac{1}{2}^i < 1$. So, the variables in the most recent learned clause all have higher VSIDS score than any variable not in the most recent learned clause for $\delta \leq 1/2$. Recall that we assume the solver restarts after every conflict. Therefore, upon branching, the solver branches on unassigned variables in the newest learned clause first, before branching on any other variables. Finally, because of fixed phase value selection, the solver assigns the (unassigned) variables in the most recent learned clause to false.
\end{proof}

Informally, what the Focus Lemma says is that whenever the solver has to make a decision during a run, the solver cannot branch on a variable which does not appear in the newest learned clause \textit{if there exists at least one unassigned variable in the newest learned clause}. Note that it could be the case that a variable in the most recent learned clause is already propagated to true before a conflict is derived, and the solver thus didn't assign false to it. The Focus Lemma says nothing about this.

Additionally, observe that the Focus Lemma can be extended beyond just the newest learned clause. Under this configuration, we can view branching as ranking each variable based solely on the most recent learned clause it appears in (where higher rank means the variable appeared in a newer learned clause), and then picking the variable with highest rank (and using the tie-breaking rule if necessary). Thus, the Focus Lemma proves the generally known equivalence between VSIDS with decay factor at most 1/2 and VMTF (while also additionally prescribing the polarity of variable assignment due to the solver configuration).  

\begin{definition}[{\bf Ordered Variable Sequence and Prefix Clause}]
For any column $j$, let $\mathcal{L}_j$ be the ordered sequence of variables in that column: $\mathcal{L}_j = \langle P_{1,j}, P_{2,j}, \dots, P_{j-1,j}, P_{j+1,j}, \dots, P_{n, j} \rangle$. Note that for $j=1$, variable $P_{1,1}$ does not exist, so the sequence starts at $P_{2,1}$. Let $C(j, k)$ be the disjunction of the first $k$ variables in $\mathcal{L}_j$. Note that $D(j) = C(j, n - 1)$.
\end{definition}

\begin{theorem}
\label{theorem:clause_structure}
For the CDCL solver configuration in Section~\ref{sec:configuration}, the solver always learns exactly the following clauses in this exact order on $OP_n$ instances for $n \geq 6$:
$$
\begin{array}{rll}
\multicolumn{3}{l}{\textbf{Head}} \\[.5em]
& C(1, n-2)  \\
& C(1, n-3) \\[1em]

\multicolumn{3}{l}{\textbf{Descending Cascade}} \\[.5em]
& \multicolumn{2}{l}{\text{For each } j \text{ descending from } n-2 \text{ to } 2:} \\[0.5em]
& \left.
    \begin{array}{l}
    C(j, n-2) \\
    C(j, n-3) \\
    \quad \vdots \\
    C(j, j-1)
    \end{array}
  \right\} & \parbox{18em}{A triangular block of depth $n-j$. \\ Starts at length $n - 2$, ends at length $j-1$.} \\[1em]

\multicolumn{3}{l}{\textbf{Tail}} \\[.5em]
& \left.
    \begin{array}{l}
    C(1, n-4) \\
    \quad \vdots \\
    C(1, 2)
    \end{array}
  \right\} & \parbox{18em}{Resumes column 1. \\ Ends with the pair $P_{2,1} \vee P_{3,1}$.}
\end{array}
$$
\end{theorem}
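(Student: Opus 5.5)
We argue by induction along the learning sequence. The proof simulates the solver run by run, exactly as the opening of this section describes. For each run we determine the decisions forced by the Focus Lemma, its extension to the ranking of variables by the most recent learned clause containing them, and the column-major tie-break. We track the resulting unit propagations, locate the conflict, and read off the 1UIP clause. The induction hypothesis is that, just before a given run, $\Gamma$ consists exactly of the clauses listed before the current one in the stated order. Since every clause in the list is a prefix clause $C(j,k)$, the VMTF-style ranking is easy to describe. The top-ranked variables are those of the latest clause, then the remaining variables of earlier clauses in reverse learning order, then the untouched variables in column-major order. We will also use that, because the solver restarts, every run starts from the empty trail, so the trail depends only on $\Gamma$ and the ranking.

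The first run is the base case. No clause has been learned, so the tie-break assigns $P_{2,1},P_{3,1},\dots$ to false in order. Once $P_{2,1},\dots,P_{n-1,1}$ are false, $D(1)$ propagates $P_{n,1}$. The next decisions, $P_{1,2},P_{3,2},\dots$ false, interact with the transitivity clauses $A(i,n,1)$ and the antisymmetry clause $B(1,n)$. We must follow this run until the first conflict and verify that 1UIP yields $C(1,n-2)$. The second clause $C(1,n-3)$ should come from an analogous run in which the decisions are now confined to $C(1,n-2)$.

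For the general step we use a uniform run template. Let the current last clause be $C(j,k)$. By the Focus Lemma the solver first falsifies the variables of $C(j,k)$ in order, except the last one, which $C(j,k)$ itself propagates to true, namely $P_{i^*,j}$ for the $k$-th row $i^*$ of $\mathcal{L}_j$. After that, older clauses (for example $C(j,k+1)$, or the completed earlier blocks of columns $j'>j$, which act like unit facts once their prefixes are false) and transitivity clauses $A(\cdot,i^*,j)$ propagate. Next the solver decides on the next-ranked variables, which come from the previous block. We then show that the conflict's 1UIP cut lies exactly at the decision level of the last decision in column $j$, so the learned clause is the negation of the decisions $P_{1,j},\dots$. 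This gives $C(j,k-1)$, and we show $C(j',n-2)$ at the transition into a new column. Each clause of the list thus gets its own case. The cases to handle are: the interior of a block; the end of a block, where $C(j,j-1)$ is reached and the triangle for column $j-1$ begins; the switch from the Head to column $n-2$; and the switch from column $2$ to the Tail. The condition $n\ge 6$ should enter at these boundaries, where small indices collide.

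The main obstacle will be computing the implication graph in each case precisely enough to pin down the 1UIP clause. We must show that no earlier conflict occurs, that propagations through $A_n$ and $B_n$ do not produce an extra literal in the learned clause, and that the UIP is the intended decision rather than a propagated literal. I would first work out the two Head runs and one interior Descending Cascade run explicitly for a concrete small $n$, say $n=6$, to read off the general propagation pattern. I would then state a lemma describing the full trail of a run as a function of $(j,k)$ and prove it by the induction above. Finally, the theorem requires that the last clause is $P_{2,1}\vee P_{3,1}$. After it, the next run propagates enough to derive the empty clause. This shows the list is complete and yields polynomially many ($O(n^2)$) conflicts.
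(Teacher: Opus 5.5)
Your overall strategy is the paper's: induction along the learned sequence, with each post-restart run fixed by the Focus Lemma/VMTF ranking and the column-major tie-break, then reading off the 1UIP clause, with separate boundary cases. But the substance of the proof is the explicit trail of each run, and the template you sketch cannot produce the stated clauses. You have the solver make further decisions after the column-$j$ ones, yet place the 1UIP at the level of the last column-$j$ decision. These are incompatible, since the 1UIP literal always lies at the conflict level.

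In fact no further decision is made in any interior step. With $C(j,k)$ last ($k\ge j$), the decisions $\neg P_{i,j}$, $i\in\{1,\dots,k\}\setminus\{j\}$, make $C(j,k)$ propagate $P_{k+1,j}$. Then $A(i,k+1,j)$ and $B(j,k+1)$ falsify $P_{1,k+1},\dots,P_{k,k+1}$. That is exactly the clause $C(k+1,k)$ that closed the earlier block of column $k+1$, so the conflict is immediate. For $k=n-2$, $D(n-1)$ instead propagates $P_{n,n-1}$ and $D(n)$ is falsified. The last decision $\neg P_{i,j}$ is the 1UIP because it reaches $\neg P_{i,k+1}$ directly through $A(i,k+1,j)$, bypassing $P_{k+1,j}$.

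You gesture at the earlier blocks, but their precise role, as the falsified clause at the current level, is the key idea. The clause $C(j,k+1)$ you also cite is already satisfied by $P_{k+1,j}$. Your base case has the same flaw: there are no column-2 decisions. Once $P_{n,1}$ is propagated, $A(i,n,1)$ for $2\le i\le n-1$ and $B(1,n)$ falsify $D(n)$ at level $n-2$. Any later decision would make learning $C(1,n-2)$ impossible.

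At the transitions, the claim that the learned clause is the negation of the decisions is false, and you give no mechanism for them. Take the Descending step, with last clause $C(j,j-1)$ and $j>2$. After $\neg P_{1,j},\dots,\neg P_{j-2,j}$, the clause $C(j,j-1)$ propagates $P_{j-1,j}$ and nothing conflicts. So the solver decides $\neg P_{j+1,j},\dots,\neg P_{n-1,j}$, taken from the older clauses $C(j,j),\dots,C(j,n-2)$ of the same block. Each negative literal of column $j$ is copied into column $j-1$ by $A(i,j-1,j)$, and $B(j-1,j)$ gives $\neg P_{j,j-1}$. Then $D(j-1)$ propagates $P_{n,j-1}$ and $D(n)$ is falsified. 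The 1UIP is the propagated literal $\neg P_{n-1,j-1}$, and $C(j-1,n-2)$ consists entirely of propagated literals in a column where nothing was decided.

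The Head-to-cascade switch works the same way, with extra decision $\neg P_{n-1,1}$ and UIP $\neg P_{n-1,n-2}$. So does the Pivot: column-2 decisions are mirrored into column 1 via $A(i,1,2)$ and the unit $P_{1,2}$. Then $C(1,n-3)$ propagates $P_{n-2,1}$, which falsifies $C(n-2,n-3)$, with UIP $\neg P_{n-3,1}$.

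Until these runs are worked out, the proposal is a plan rather than a proof. You also need the check that no other clause becomes unit or falsified first, which the paper supplies through its completeness arguments.
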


To illustrate the pattern, for $OP_6$ (left) and $OP_7$ (right) the learned clauses are:

\[
\makebox[\textwidth][c]{$ %
\begin{array}{@{}l@{\qquad \qquad \qquad}l@{}}
\begin{array}[t]{@{}l@{}}
P_{2,1}\;P_{3,1}\;P_{4,1}\;P_{5,1}\\
P_{2,1}\;P_{3,1}\;P_{4,1}\\[15pt]
P_{1,4}\;P_{2,4}\;P_{3,4}\;P_{5,4}\\
P_{1,4}\;P_{2,4}\;P_{3,4}\\[3pt]
P_{1,3}\;P_{2,3}\;P_{4,3}\;P_{5,3}\\
P_{1,3}\;P_{2,3}\;P_{4,3}\\
P_{1,3}\;P_{2,3}\\[3pt]
P_{1,2}\;P_{3,2}\;P_{4,2}\;P_{5,2}\\
P_{1,2}\;P_{3,2}\;P_{4,2}\\
P_{1,2}\;P_{3,2}\\
P_{1,2}\\[15pt]
P_{2,1}\;P_{3,1}
\end{array}
&
\begin{array}[t]{@{}l@{}}
P_{2,1}\;P_{3,1}\;P_{4,1}\;P_{5,1}\;P_{6,1}\\
P_{2,1}\;P_{3,1}\;P_{4,1}\;P_{5,1}\\[15pt]
P_{1,5}\;P_{2,5}\;P_{3,5}\;P_{4,5}\;P_{6,5}\\
P_{1,5}\;P_{2,5}\;P_{3,5}\;P_{4,5}\\[3pt]
P_{1,4}\;P_{2,4}\;P_{3,4}\;P_{5,4}\;P_{6,4}\\
P_{1,4}\;P_{2,4}\;P_{3,4}\;P_{5,4}\\
P_{1,4}\;P_{2,4}\;P_{3,4}\\[3pt]
P_{1,3}\;P_{2,3}\;P_{4,3}\;P_{5,3}\;P_{6,3}\\
P_{1,3}\;P_{2,3}\;P_{4,3}\;P_{5,3}\\
P_{1,3}\;P_{2,3}\;P_{4,3}\\
P_{1,3}\;P_{2,3}\\[3pt]
P_{1,2}\;P_{3,2}\;P_{4,2}\;P_{5,2}\;P_{6,2}\\
P_{1,2}\;P_{3,2}\;P_{4,2}\;P_{5,2}\\
P_{1,2}\;P_{3,2}\;P_{4,2}\\
P_{1,2}\;P_{3,2}\\
P_{1,2}\\[15pt]
P_{2,1}\;P_{3,1}\;P_{4,1}\\
P_{2,1}\;P_{3,1}
\end{array}
\end{array}
$} %
\]

\begin{proof}
The structure of our proof is to show that the solver first learns the \textbf{Head}, then transitions to and learns the \textbf{Descending Cascade}, and finally transitions to and learns the \textbf{Tail} and derives UNSAT. We inductively show that, given the solver has learned some prefix of the learned clauses in the theorem statement, it correctly learns the next clause. We show the corresponding implication graphs to learned clauses, and for each graph we show the 1UIP node and cut. First, we state a key invariant:

\begin{lemma}[{\bf Equal Score Invariant}]
For the CDCL solver configuration in Section~\ref{sec:configuration}, when solving OP instances, for each learned clause $C$ one of the following holds: (1) \textbf{all} variables in $C$ participate in a learned clause for the first time in $C$, (2) \textbf{all} variables in $C$ participate in a previously learned clause $C'$ and in no clause learned after $C'$ but before $C$. 

It follows that the variables in $C$ all have the same VSIDS score right after the clause $C$ has been learned and the scores have been bumped.
\end{lemma}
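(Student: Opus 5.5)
This invariant is a property of the learned clause sequence, so I will prove it jointly with Theorem~\ref{theorem:clause_structure}, inside the same induction on the number of clauses learned. I will not try to prove it separately from first principles. Concretely, the inductive hypothesis is that the solver has learned exactly some prefix of the sequence listed in Theorem~\ref{theorem:clause_structure}. I then check conditions (1) and (2) directly against that explicit list for the next clause $C$.

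Every learned clause has the form $C(j,k)$, and $C(j,k)\subseteq C(j,k')$ whenever $k\le k'$. Clauses in different columns are variable-disjoint, since the variables of $C(j,\cdot)$ all lie in column $j$. So there are three cases.

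\textbf{Case (1).} $C$ is the first clause learned in its column. This covers $C(1,n-2)$ in the Head and $C(j,n-2)$ at the start of each block of the Descending Cascade. No earlier learned clause contains a variable of column $j$, so every variable of $C$ is participating for the first time.

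\textbf{Case (2a).} $C$ is $C(j,k)$ with $k<n-2$, learned immediately after $C(j,k+1)$. This covers the Head's second clause and every non-initial clause of a cascade block. Take $C'=C(j,k+1)$: it contains all variables of $C$, and no clause is learned strictly between them.

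\textbf{Case (2b).} $C$ is the first Tail clause, $C(1,n-4)$. Take $C'=C(1,n-3)$, the last Head clause. Every clause learned between them belongs to the Descending Cascade, so it lives in some column $j\ge 2$ and is disjoint from column $1$. Later Tail clauses fall under Case (2a).

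For the score consequence, I use the update rule $q(x,t)=b(x,t)+\delta q(x,t-1)$. A variable's score is determined entirely by the set of conflicts at which it was bumped.

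In Case (1), every variable of $C$ had score $0$ before $C$ was learned, since no earlier clause contained it. So they all have score exactly $1$ afterwards.

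In Case (2), every variable of $C$ is also in $C'$. I need these variables to share the same bump history up to $C'$. That follows from the invariant applied inductively to $C'$: right after $C'$ was learned, all variables of $C'$ had equal score. Between $C'$ and $C$ none of them is bumped, so they all decay by the same factor. Then all of them are bumped at $C$, so their scores remain equal.

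Note that the equal-score conclusion needs the equality for $C'$, not merely that each variable of $C$ appears in $C'$. This is why I run the whole argument as a strong induction over the learned sequence.

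The main obstacle is not this bookkeeping but the dependence on Theorem~\ref{theorem:clause_structure}. The invariant is presumably used within the theorem's proof to pin down VSIDS tie-breaking: within the focus clause all scores are equal, so the column-major order decides. The two statements must therefore be proved together. At step $t$, I assume both the theorem (clauses $1,\dots,t$ are as listed) and the invariant (for those clauses). The theorem's inductive step, which uses the invariant for earlier clauses, yields clause $t+1$, and the case analysis above then yields the invariant for clause $t+1$. I will make this mutual induction explicit, so that nothing is used before it has been established.
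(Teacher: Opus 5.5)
Your proposal is correct and follows the same approach as the paper. The paper also establishes the invariant inside the induction of Theorem~\ref{theorem:clause_structure}, declaring Case 1 for $C(1,n-2)$, $C(n-2,n-2)$ and each Descending Lemma clause $C(l-1,n-2)$, and Case 2 for $C(1,n-3)$ and the Cascade, Pivot and Tail Lemma clauses; this is exactly your split into Cases (1), (2a) and (2b). Your strong-induction argument for the equal-score consequence, which uses equality of scores for $C'$ rather than mere containment in $C'$, spells out what the paper compresses into ``It follows''.
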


A consequence of the \textbf{Equal Score Invariant} is that we branch on variables in the most recent learned clauses (by the Focus Lemma) in column-major order (due to the tie-breaking rule). We will show that the \textbf{Equal Score Invariant} is maintained for all of the learned clauses.  

\begin{figure}[H]
\centering

\begin{minipage}[t]{0.2\textwidth}
\vspace{0pt}
\begin{enumerate}
\item[$1:$] $A(2, n, 1)$
\item[$2:$] $D(1)$
\item[$3:$] $D(1)$
\item[$4:$] $A(n - 1, n, 1)$
\item[$5:$] $A(2, n, 1)$
\item[$6:$] $A(n-1, n, 1)$
\item[$7:$] $B(1, n)$
\item[$8:$] $D(n)$
\item[$9:$] $D(n)$
\item[$10:$] $D(n)$
\end{enumerate}
\end{minipage}\hfill
\begin{minipage}[t]{0.78\textwidth}
\vspace{0pt}
\centering
\resizebox{1\textwidth}{!}{%
\begin{circuitikz}
\IGsetup

\node[lit] (0)   at (1.75,17.75) {$\neg P_{2,1}@1$};
\node[lit, draw=red, dashed] (2)  at (7.75,17.75) {$\neg P_{n-1,1}@n-2$};

\node[lit] (3)   at (4.75,14.00) {$P_{n,1}@n-2$};

\node[lit] (4)   at (1.75,10.50) {$\neg P_{2,n}@n-2$};
\node[lit] (6)  at (7.75,10.50) {$\neg P_{n-1,n}@n-2$};
\node[lit] (7)   at (12.25,10.50) {$\neg P_{1,n}@n-2$};

\node[lit] (conf)  at (7.75,6)  {CONFLICT};

\node[dot] (dotsTopR) at (4.75,17.75) {$\dots$};
\node[dot] (dotsBotM) at (4.75,10.50) {$\dots$};

\draw[arr] (0)  -- node[midway, left] {$1$} (4);
\draw[arr] (0)  -- node[midway, right] {$2$} (3);
\draw[arr] (2) -- node[midway, left] {$3$} (3);
\draw[arr] (2) -- node[midway, right] {$4$}(6);

\draw[arr] (3)  -- node[midway, left] {$5$} (4);
\draw[arr] (3)  -- node[midway, left] {$6$} (6);
\draw[arr] (3)  -- node[midway, above] {$7$} (7);

\draw[arr] (4)  -- node[midway, left] {$8$} (conf);
\draw[arr] (6) -- node[midway, left] {$9$} (conf);
\draw[arr] (7)  -- node[midway, left] {$10$} (conf);

\draw[red, dashed, line width=0.8pt] (0.75,16.25) -- (13,16.25);

\end{circuitikz}
}%
\end{minipage}

\caption{First Head Clause Implication Graph. 1UIP node and cut in red.}
\label{fig:head1}

\end{figure}
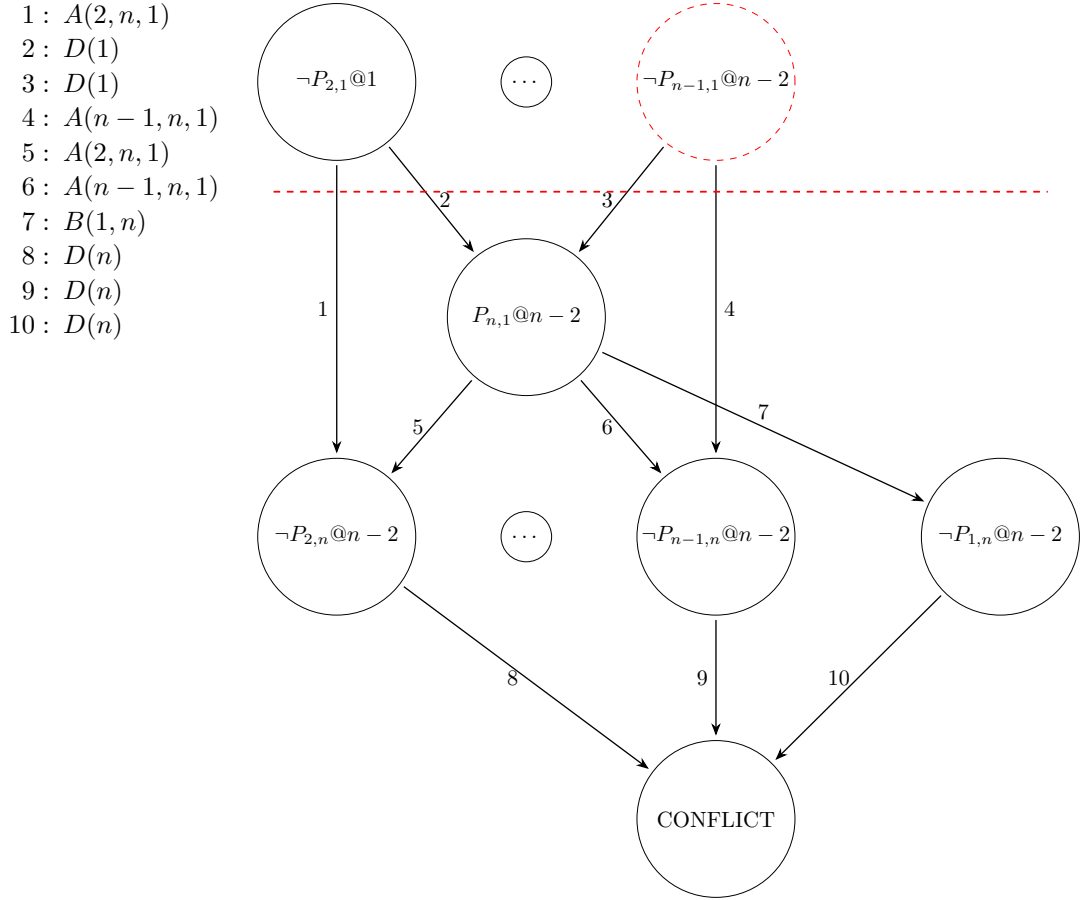

At first, since $|\Gamma| = 0$ and all VSIDS scores are $0$, the solver branches on the OP variables in column-major order; that is, the solver branches on $P_{2, 1}, P_{3, 1}, ..., P_{n - 1, 1}$ (and assigns them to false). After doing so, we get a conflict, and 1UIP derives the learned clause $C(1, n - 2)$ (Figure \ref{fig:head1}). The \textbf{Equal Score Invariant} holds by Case 1. Then, by the Focus Lemma, the solver branches on $P_{2, 1}, ..., P_{n - 2, 1}$ before any other variables. This then yields a conflict and 1UIP derives the learned clause $C(1, n - 3)$ (Figure \ref{fig:head2}). Because each variable in this learned clause was also in the previous learned clause, the \textbf{Equal Score Invariant} is maintained. So, we see that the learned clause database always starts off with \textbf{Head}, and the \textbf{Equal Score Invariant} is maintained for the \textbf{Head} by Case 2.

\begin{figure}[H]

\centering
\begin{minipage}[t]{0.25\textwidth}
\vspace{0pt}
\begin{enumerate}
\item[$1:$] $A(2,n-1,1)$
\item[$2:$] $C(1, n-2)$
\item[$3:$] $C(1, n-2)$
\item[$4:$] $A(n - 2, n - 1, 1)$
\item[$5:$] $B(1, n - 1)$
\item[$6:$] $A(2, n - 1, 1)$
\item[$7:$] $A(n - 2, n - 1, 1)$
\item[$8:$] $A(1, n, n - 1)$
\item[$9:$] $D(n - 1)$
\item[$10:$] $A(2, n, n - 1)$
\item[$11:$] $D(n - 1)$
\item[$12:$] $D(n - 1)$
\item[$13:$] $A(n - 2, n, n - 1)$
\item[$14:$] $A(1,n,n-1)$
\item[$15:$] $A(2,n,n-1)$
\item[$16:$] $A(n-2,n,n-1)$
\item[$17:$] $B(n-1,n)$
\item[$18:$] $D(n)$
\item[$19:$] $D(n)$
\item[$20:$] $D(n)$
\item[$21:$] $D(n)$
\end{enumerate}
\end{minipage}\hfill
\begin{minipage}[t]{0.75\textwidth}
\vspace{0pt}
\resizebox{1\textwidth}{!}{%
\begin{circuitikz}
\IGsetup

\node[lit] (0)   at (1.75, 24) {$\neg P_{2,1}@1$};
\node[dot] (1) at (7,24) {$\dots$};
\node[lit, draw=red, dashed] (2)  at (11.75,24) {$\neg P_{n-2,1}@n-3$};

\node[lit] (3)   at (8.00,20) {$P_{n-1,1}@n-3$};

\node[lit] (4)   at (5,16) {$\neg P_{2,n-1}@n-3$};
\node[dot] (5)   at (8,16) {$\dots$};
\node[lit] (6)  at (11,16) {$\neg P_{n-2,n-1}@n-3$};
\node[lit] (7)   at (1.75,16) {$\neg P_{1,n-1}@n-3$};

\node[lit] (8) at (9,12) {$P_{n, n - 1}@n-3$};

\node[lit] (9) at (2,8) {$\neg P_{1, n}@n-3$};
\node[lit] (10) at (5,8) {$\neg P_{2, n}@n-3$};
\node[dot] (11) at (8,8) {$\dots$};
\node[lit] (12) at (11,8) {$\neg P_{n - 2, n}@n-3$};
\node[lit] (13) at(14,8) {$\neg P_{n - 1, n}@n-3$};

\node[lit] (conf)  at (8.00,4)  {CONFLICT};

\draw[arr] (0)  -- node[midway, above] {$2$} (3);
\draw[arr] (0)  -- node[midway, left] {$1$} (4);

\draw[arr] (2) -- node[midway, above] {$3$} (3);
\draw[arr] (2) -- node[midway, left] {$4$}(6);

\draw[arr] (3)  -- node[midway, above] {$6$} (4);
\draw[arr] (3)  -- node[midway, above] {$7$} (6);
\draw[arr] (3)  -- node[midway, above] {$5$} (7);

\draw[arr] (4)  -- node[midway, right] {$11$} (8);
\draw[arr] (4)  -- node[midway, left] {$10$} (10);

\draw[arr] (6)  -- node[midway, left] {$12$} (8);
\draw[arr] (6)  -- node[midway, right] {$13$} (12);

\draw[arr] (7)  -- node[midway, above] {$9$} (8);
\draw[arr] (7)  -- node[midway, left] {$8$} (9);

\draw[arr] (8)  -- node[midway, above] {$14$} (9);
\draw[arr] (8)  -- node[midway, right] {$15$} (10);
\draw[arr] (8)  -- node[midway, left] {$16$} (12);
\draw[arr] (8)  -- node[midway, right] {$17$} (13);

\draw[arr] (9)  -- node[midway, below] {$18$} (conf);
\draw[arr] (10)  -- node[midway, right] {$19$} (conf);
\draw[arr] (12) -- node[midway, left] {$20$} (conf);
\draw[arr] (13)  -- node[midway, below] {$21$} (conf);

\draw[red, dashed, line width=0.8pt] (0.75,22) -- (14.25,22);

\end{circuitikz}
}%
\end{minipage}

\caption{Second Head Clause Implication Graph}
\label{fig:head2}
\end{figure}
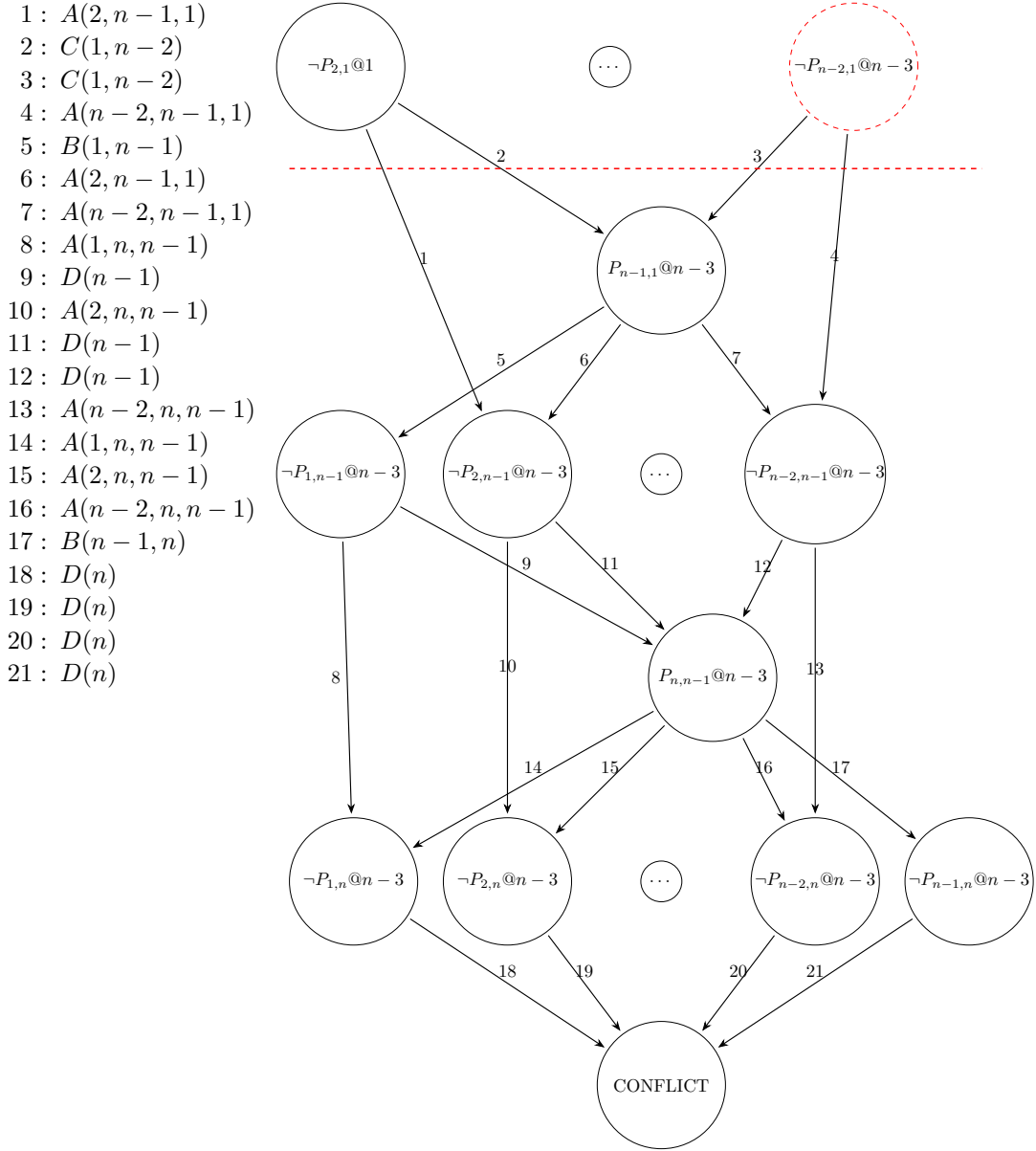

For the \textbf{Descending Cascade}, there are three facts that we show. The first is that the Descending Cascade actually {\it begins}, i.e., we learn $C(n - 2, n - 2)$ right after the head. The second is that we \textit{cascade} within a single column, i.e., given that the solver has learned $C(j, n - 2), ..., C(j, k)$ for $k > j - 1$, the solver learns $C(j, k - 1)$ next (Lemma~\ref{lem:cascade}). The third is that after learning $C(j, j - 1)$ for $j > 2$, we \textit{descend} to the next column and learn $C(j - 1, n - 2)$ (Lemma~\ref{lem:descending}).

After learning the Head, by the Focus Lemma, we see that the solver next branches on $P_{2, 1}, ..., P_{n - 3, 1}, P_{n - 1, 1}$. After doing so, 1UIP derives the learned clause $C(n - 2, n - 2)$ (Figure \ref{fig:dcas1}). None of the variables in the learned clause were involved in any of the previous learned clauses, so the \textbf{Equal Score Invariant} holds by case 1. 

\begin{figure}[H]
\centering
\begin{minipage}[t]{0.25\textwidth}
\vspace{0pt}
\begin{enumerate}
\item[$1:$] $A(2,n-2,1)$
\item[$2:$] $C(1, n - 3)$
\item[$3:$] $C(1, n-3)$
\item[$4:$] $A(n - 3, n - 2, 1)$
\item[$5:$] $A(n - 1, n - 2, 1)$
\item[$6:$] $B(1, n - 2)$
\item[$7:$] $A(2, n - 2, 1)$
\item[$8:$] $A(n - 3, n - 2, 1)$
\item[$9:$] $A(n - 1, n - 2, 1)$
\item[$10:$] $A(1, n, n - 2)$
\item[$11:$] $D(n - 2)$
\item[$12:$] $A(2, n, n - 2)$
\item[$13:$] $D(n - 2)$
\item[$14:$] $D(n - 2)$
\item[$15:$] $A(n - 3, n, n - 2)$
\item[$16:$] $D(n - 2)$
\item[$17:$] $A(n - 1, n, n - 2)$
\item[$18:$] $A(1, n, n - 2)$
\item[$19:$] $A(2, n, n - 2)$
\item[$20:$] $A(n - 3, n, n - 2)$
\item[$21:$] $A(n - 1, n, n - 2)$
\item[$22:$] $B(n - 2, n)$
\item[$23:$] $D(n)$
\item[$24:$] $D(n)$
\item[$25:$] $D(n)$
\item[$26:$] $D(n)$
\item[$27:$] $D(n)$
\end{enumerate}
\end{minipage}\hfill
\begin{minipage}[t]{0.75\textwidth}
\vspace{0pt}
\resizebox{1\textwidth}{!}{%
\begin{circuitikz}
\IGsetup

\node[lit] (0)   at (1.75,17.75) {$\neg P_{2,1}@1$};
\node[lit] (1)  at (10.50,17.75) {$\neg P_{n-3,1}@n\!-\!4$};
\node[lit] (2)  at (14.25,17.75) {$\neg P_{n-1,1}@n\!-\!3$};

\node[dot] (dotsTop1) at (3.75,17.75) {$\dots$};

\node[lit] (3)  at (8.00,14.00) {$P_{n-2,1}@n\!-\!4$};

\node[lit] (4) at (0,10.75) {$\neg P_{1,n-2}@n\!-\!4$};
\node[lit] (12) at (4.5, 10.75) {$\neg P_{2, n - 2}@n - 4$};
\node[lit] (5) at (10.25,10.75) {$\neg P_{n-3,n-2}@n\!-\!4$};
\node[lit, draw=red, dashed] (6)at (14.00,10.75) {$\neg P_{n-1,n-2}@n\!-\!3$};

\node[dot] (dotsMid1) at (7,10.75) {$\dots$};

\node[lit] (7) at (8.00,7) {$P_{n,n-2}@n\!-\!3$};

\node[lit] (8)  at (1.75,3.25) {$\neg P_{1,n}@n\!-\!3$};
\node[lit] (13) at (6, 3.25) {$\neg P_{2, n}@n - 3$};
\node[lit] (9) at (9.25,3.25) {$\neg P_{n-3,n}@n\!-\!3$};
\node[lit] (10) at (12,3.25) {$\neg P_{n-1,n}@n\!-\!3$};

\node[dot] (dotsBot1) at (3.75,3.25) {$\dots$};

\node[lit] (11) at (15,3.25) {$\neg P_{n-2,n}@n\!-\!3$};

\node[lit] (conf) at (9.00, -1) {CONFLICT};

\draw[arr] (0) -- node[midway, left] {$1$} (12);
\draw[arr] (0) -- node[midway, above] {$2$} (3);
\draw[arr] (1) -- node[midway, above] {$3$} (3);
\draw[arr] (1) -- node[midway, right] {$4$} (5);

\draw[arr] (2) -- node[midway, right] {$5$} (6);

\draw[arr] (3) -- node[midway, above] {$6$} (4);
\draw[arr] (3) -- node[midway, above] {$7$} (12);
\draw[arr] (3) -- node[midway, above] {$8$} (5);
\draw[arr] (3) -- node[midway, above] {$9$} (6);

\draw[arr] (4) -- node[midway, above] {$11$} (7);
\draw[arr] (4) -- node[midway, right] {$10$} (8);

\draw[arr] (5) -- node[midway, left] {$14$} (7);
\draw[arr] (5) -- node[midway, right] {$15$} (9);

\draw[arr] (6) -- node[midway, above] {$16$} (7);
\draw[arr] (6) -- node[midway, left] {$17$} (10);

\draw[arr] (12) -- node[midway, left] {$12$} (13);
\draw[arr] (12) -- node[midway, above] {$13$} (7);

\draw[arr] (7) -- node[midway, above] {$18$} (8);
\draw[arr] (7) -- node[midway, left] {$19$} (13);
\draw[arr] (7) -- node[midway, left] {$20$} (9);
\draw[arr] (7) -- node[midway, above] {$21$} (10);
\draw[arr] (7) -- node[midway, above] {$22$} (11);

\draw[arr] (8) -- node[midway, left] {$23$} (conf);
\draw[arr] (13) -- node[midway, left] {$24$} (conf);
\draw[arr] (9) -- node[midway, left] {$25$} (conf);
\draw[arr] (10) -- node[midway, left] {$26$} (conf);
\draw[arr] (11) -- node[midway, above] {$27$} (conf);

\draw[red, dashed, line width=0.8pt] (0,8.6) -- (15.50,8.6);

\end{circuitikz}
}%
\end{minipage}
\caption{First Descending Cascade Clause Implication Graph}
\label{fig:dcas1}
\end{figure}
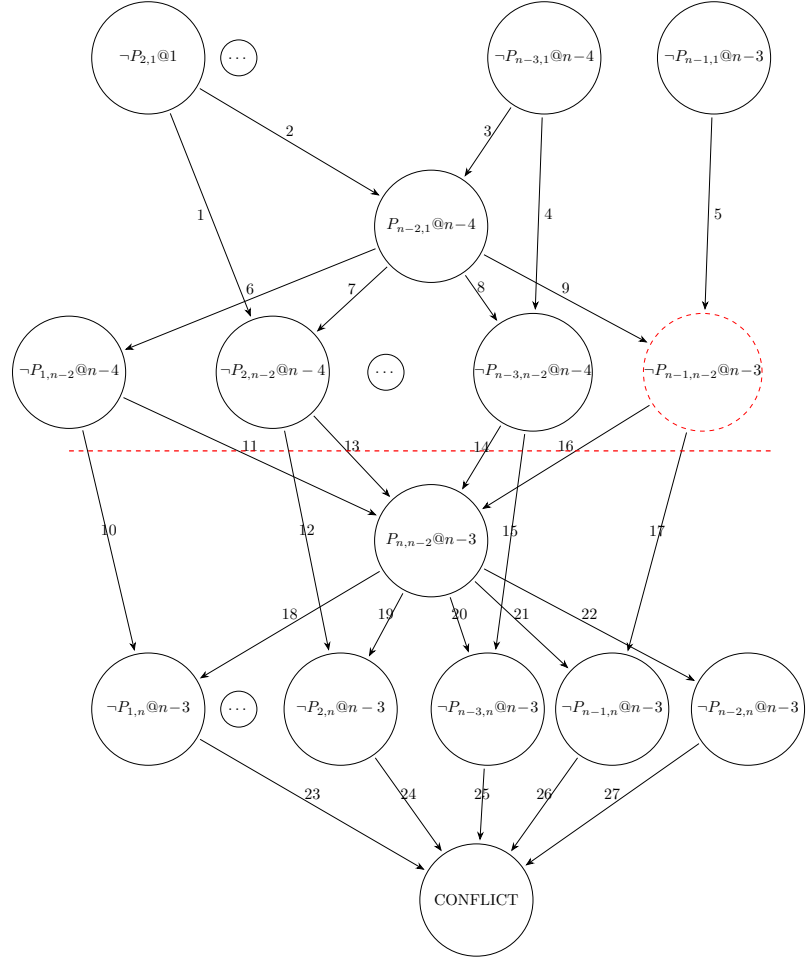
\newpage
\begin{lemma}[{\bf Cascade Lemma}] \label{lem:cascade}
Assume that $\Gamma$ contains the \textbf{Head} clauses followed by $C(j, n - 2), ..., C(j, j - 1)$ for $j = n - 2, ..., l + 1$, and then $C(l, n - 2), ..., C(l, k)$ for $k > l - 1$. Additionally, assume that the \textbf{Equal Score Invariant} holds for all of those clauses. The next learned clause is $C(l, k - 1)$, and the \textbf{Equal Score Invariant} still holds.
\end{lemma}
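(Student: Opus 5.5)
The plan is to follow the same pattern as the Head and the first Descending Cascade clause (Figures~\ref{fig:head2} and~\ref{fig:dcas1}). I would reconstruct the trail after the restart that follows learning $C(l,k)$, exhibit the implication graph up to the first conflict, and read off the 1UIP cut. Throughout, let $x_1,\dots,x_k$ be the first $k$ variables of $\mathcal{L}_l$, so that $C(l,k)=x_1\vee\dots\vee x_k$ and $x_k=P_{i,l}$ for a row $i\neq l$.

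\textbf{Trail.} By the Focus Lemma, every decision made while $C(l,k)$ still has an unassigned variable is a variable of $C(l,k)$ set to false. By the Equal Score Invariant these variables are tied, so the column-major tie-break makes the decisions $\neg x_1,\neg x_2,\dots$ in order.

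\begin{itemize}
\item First I must check that no propagation occurs at levels $1,\dots,k-2$ that assigns some $x_a$ to true or reaches a conflict. With only $\neg x_1,\dots,\neg x_{a}$ on the trail, the only clauses that could become unit are the learned clauses $C(j,\cdot)$ for $j>l$, the Head clauses and $C(l,k')$ for $k'>k$, and the OP clauses. The first two groups mention only variables in other columns, which are unassigned. The last group needs at least one true literal $P_{\cdot,\cdot}$ to fire, and there is none.
\item After decision $\neg x_{k-1}$ at level $k-1$, the clause $C(l,k)$ becomes unit and propagates $P_{i,l}$.
\end{itemize}

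\textbf{Propagation to the conflict at level $k-1$.} This mirrors Figure~\ref{fig:dcas1}.
\begin{itemize}
\item From $P_{i,l}$ and each $\neg x_a=\neg P_{r_a,l}$, transitivity $A(r_a,i,l)$ gives $\neg P_{r_a,i}$.
\item Antisymmetry $B(l,i)$ gives $\neg P_{l,i}$.
\item For rows $r$ outside the prefix, I need $\neg P_{r,i}$ as well. These come from the learned clauses: the columns $j>l$ already have short clauses $C(j,j-1)$ learned. I expect that either $D(i)$ becomes unit on the single remaining row, or $D(i)$ becomes falsified outright.
\item In the unit case, the propagated literal $P_{n,i}$ (or the analogous remaining row) fires transitivity clauses into column $n$, namely $\neg P_{r,n}$ for all $r$, and $D(n)$ is falsified, exactly as in Figures~\ref{fig:head2} and~\ref{fig:dcas1}.
\end{itemize}

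\textbf{Conflict analysis.} Every implied literal at level $k-1$ depends on the decision $\neg x_{k-1}$, through $P_{i,l}$, which is forced only by it. Hence $\neg x_{k-1}$ is the first UIP. The cut places all decisions $\neg x_1,\dots,\neg x_{k-1}$ on the reason side. I need to argue that resolving back to the UIP eliminates every implied literal, with no earlier-level implied literal surviving. This holds because the levels $1,\dots,k-2$ contain no implied literals at all, by the trail analysis. The learned clause is therefore $x_1\vee\dots\vee x_{k-1}=C(l,k-1)$. Since $C(l,k-1)\subset C(l,k)$ and $C(l,k)$ is the most recent clause, Case~2 of the Equal Score Invariant holds.

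\textbf{Main obstacle.} The hardest part is the middle step: showing uniformly in $l$ and $k$ that the conflict occurs exactly at level $k-1$ and that no earlier-level propagation from previously learned clauses interferes.
\begin{itemize}
\item The boundary cases need separate checking. One is when the prefix crosses the missing diagonal entry $P_{l,l}$, i.e.\ $k$ near $l$. Another is when $i=n$ or $i=n-1$, where the target column for the second propagation wave changes, as in the difference between Figures~\ref{fig:head2} and~\ref{fig:dcas1}.
\item I would first handle the generic case $i<n-1$, where the chain $P_{i,l}\Rightarrow P_{n,i}\Rightarrow\neg P_{\cdot,n}$ ends in $D(n)$.
\item I would then treat $i\in\{n-1,n\}$ separately, routing the conflict through $D(i)$ directly.
\end{itemize}
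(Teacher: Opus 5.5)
\textbf{There is a genuine gap in the middle step.} Your outer framework matches the paper's: decisions $\neg x_1,\dots,\neg x_{k-1}$ via the Focus Lemma and the tie-break, one propagation $P_{i,l}$ at level $k-1$, the last decision as 1UIP, and Case~2 of the Equal Score Invariant. Your check that nothing fires at levels $1,\dots,k-2$ is fine. The middle step is the heart of the lemma, and the mechanism you predict there is wrong.

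Since $k\ge l$, the $k$-th entry of $\mathcal{L}_l$ is $x_k=P_{k+1,l}$, so $i=k+1\in\{l+1,\dots,n-1\}$; the case $i=n$ never arises. Transitivity $A(r,k+1,l)$ with the $k-1$ decided rows, together with $B(l,k+1)$, falsifies exactly $P_{1,k+1},\dots,P_{k,k+1}$. Nothing else in column $k+1$ can be falsified. Every OP clause that could yield some $\neg P_{r,k+1}$ needs a true literal, and $P_{k+1,l}$ is the only one available. The learned clauses contain only positive literals, so they can never propagate a negative literal; your plan to obtain $\neg P_{r,i}$ for the remaining rows from learned clauses cannot work.

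Hence for $k\le n-3$, $D(k+1)$ keeps $n-1-k\ge 2$ unassigned literals and never becomes unit. The chain $P_{i,l}\Rightarrow P_{n,i}\Rightarrow D(n)$ you propose for the generic case does not occur. What does occur is that the falsified literals form exactly $C(k+1,k)$. This is the last clause of the block already learned for column $k+1\in\{l+1,\dots,n-2\}$, and that learned clause is the conflict clause (Figures~\ref{fig:dcas3} and~\ref{fig:dcas4}).

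Your two-wave chain is instead the boundary case $k=n-2$. Column $n-1$ carries no learned clauses, so $D(n-1)$ becomes unit on $P_{n,n-1}$, and the second wave falsifies $D(n)$ (Figures~\ref{fig:dcas2} and~\ref{fig:dcas5}); the conflict is not routed ``through $D(i)$ directly.'' So you have the two mechanisms swapped. The missing idea is that the cascade in column $l$ is driven by the completed blocks $C(j,j-1)$ of the later columns acting as conflict clauses.

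\textbf{Your 1UIP justification also points the wrong way.} If every level-$(k-1)$ implication depended on $\neg x_{k-1}$ only through $P_{i,l}$, then $P_{i,l}$ would be a UIP closer to the conflict. 1UIP would stop there and learn a clause containing $\neg P_{i,l}$. The decision is the first UIP because of a bypass edge. The literal $\neg x_{k-1}$ is $\neg P_{k,l}$ if $k>l$ and $\neg P_{l-1,l}$ if $k=l$. Together with $P_{k+1,l}$ it directly implies $\neg P_{k,k+1}$, respectively $\neg P_{l-1,l+1}$, which lies in the conflict clause (or, when $k=n-2$, feeds $D(n-1)$ and $D(n)$).

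In the generic case, resolving out the column-$(k+1)$ literals leaves $\neg P_{k+1,l}\vee x_1\vee\dots\vee x_{k-1}$, which has two current-level literals. One more resolution against the reason $C(l,k)$ yields $C(l,k-1)$. You also need to show that every earlier decision actually survives into the learned clause. This holds because each $\neg x_a$ feeds a conflict-side literal $\neg P_{r_a,k+1}$. The absence of implied literals at earlier levels only shows that nothing else appears.
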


\noindent There are $4$ cases for the Cascade Lemma, shown below.

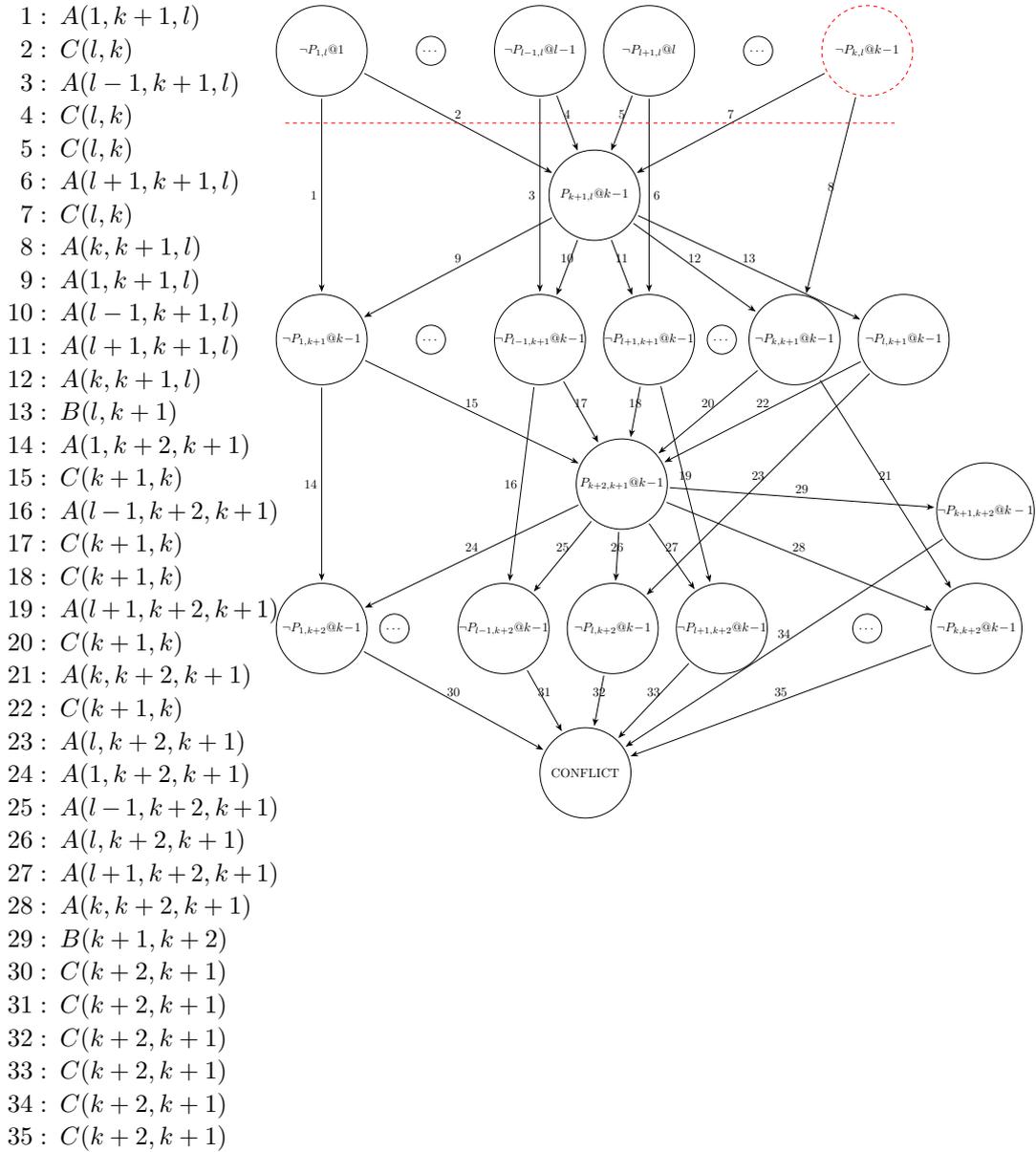
\begin{figure}[H]
\centering
\begin{minipage}[t]{0.25\textwidth}
\vspace{0pt}
\begin{enumerate}
\item[$1:$] $A(1, k + 1, l)$
\item[$2:$] $C(l, k)$
\item[$3:$] $A(l - 1, k + 1, l)$
\item[$4:$] $C(l, k)$
\item[$5:$] $C(l, k)$
\item[$6:$] $A(l + 1, k + 1, l)$
\item[$7:$] $C(l, k)$
\item[$8:$] $A(k, k + 1, l)$
\item[$9:$] $A(1, k + 1, l)$
\item[$10:$] $A(l - 1, k + 1, l)$
\item[$11:$] $A(l + 1, k + 1, l)$
\item[$12:$] $A(k, k + 1, l)$
\item[$13:$] $B(l, k + 1)$
\item[$14:$] $A(1, k + 2, k + 1)$
\item[$15:$] $C(k + 1, k)$
\item[$16:$] $A(l - 1, k + 2, k + 1)$
\item[$17:$] $C(k + 1, k)$
\item[$18:$] $C(k + 1, k)$
\item[$19:$] $A(l + 1, k + 2, k + 1)$
\item[$20:$] $C(k + 1, k)$
\item[$21:$] $A(k, k + 2, k + 1)$
\item[$22:$] $C(k + 1, k)$
\item[$23:$] $A(l, k + 2, k + 1)$
\item[$24:$] $A(1, k + 2, k + 1)$
\item[$25:$] $A(l - 1, k + 2, k + 1)$
\item[$26:$] $A(l, k + 2, k + 1)$
\item[$27:$] $A(l + 1, k + 2, k + 1)$
\item[$28:$] $A(k, k + 2, k + 1)$
\item[$29:$] $B(k + 1, k + 2)$
\item[$30:$] $C(k + 2, k + 1)$
\item[$31:$] $C(k + 2, k + 1)$
\item[$32:$] $C(k + 2, k + 1)$
\item[$33:$] $C(k + 2, k + 1)$
\item[$34:$] $C(k + 2, k + 1)$
\item[$35:$] $C(k + 2, k + 1)$
\end{enumerate}
\end{minipage}\hfill
\begin{minipage}[t]{0.75\textwidth}
\vspace{0pt}
\resizebox{1\textwidth}{!}{%
\begin{circuitikz}
\IGsetup

\node[lit] (0)   at (1.75,20) {$\neg P_{1,l}@1$};
\node[dot] (1)   at (4.75,20) {$\dots$};
\node[lit] (2)  at (7.75,20) {$\neg P_{l-1,l}@l\!-\!1$};
\node[lit] (3)  at (10.75,20) {$\neg P_{l+1,l}@l$};
\node[dot] (4)   at (13.75,20) {$\dots$};
\node[lit, draw=red, dashed] (5)  at (16.75,20) {$\neg P_{k,l}@k\!-\!1$};

\node[lit] (6)  at (9.25,16) {$P_{k+1,l}@k\!-\!1$};

\node[lit] (7)   at (1.75,12) {$\neg P_{1,k+1}@k\!-\!1$};
\node[dot] (8)   at (4.75,12) {$\dots$};
\node[lit] (9)  at (7.75,12) {$\neg P_{l-1,k+1}@k\!-\!1$};
\node[lit] (10)  at (10.75,12) {$\neg P_{l+1,k+1}@k\!-\!1$};
\node[dot] (11)   at (12.75,12) {$\dots$};
\node[lit] (12)  at (14.75,12) {$\neg P_{k,k+1}@k\!-\!1$};
\node[lit] (13)  at (17.75,12) {$\neg P_{l,k+1}@k\!-\!1$};

\node[lit] (14)  at (10,8) {$P_{k+2,k+1}@k\!-\!1$};

\node[lit] (15)   at (1.75,4) {$\neg P_{1,k+2}@k\!-\!1$};
\node[dot] (16)   at (3.75,4) {$\dots$};
\node[lit] (17)  at (6.75,4) {$\neg P_{l-1,k+2}@k\!-\!1$};
\node[lit] (18)  at (9.75,4) {$\neg P_{l,k+2}@k\!-\!1$};
\node[lit] (19)  at (12.75,4) {$\neg P_{l+1,k+2}@k\!-\!1$};
\node[dot] (20)   at (16.75,4) {$\dots$};
\node[lit] (21)  at (19.75,4) {$\neg P_{k,k+2}@k\!-\!1$};
\node[lit] (22) at (20, 7.25) {$\neg P_{k + 1, k + 2}@k - 1$};

\node[lit] (conf) at (9.00, 0) {CONFLICT};

\draw[arr] (0) -- node[midway, left] {$1$} (7);
\draw[arr] (0) -- node[midway, above] {$2$} (6);
\draw[arr] (2) -- node[midway, left] {$3$} (9);
\draw[arr] (2) -- node[midway, right] {$4$} (6);
\draw[arr] (3) -- node[midway, left] {$5$} (6);
\draw[arr] (3) -- node[midway, right] {$6$} (10);
\draw[arr] (5) -- node[midway, above] {$7$} (6);
\draw[arr] (5) -- node[midway, left] {$8$} (12);
\draw[arr] (6) -- node[midway, above] {$9$} (7);
\draw[arr] (6) -- node[midway, right] {$10$} (9);
\draw[arr] (6) -- node[midway, left] {$11$} (10);
\draw[arr] (6) -- node[midway, above] {$12$} (12);
\draw[arr] (6) -- node[midway, above] {$13$} (13);
\draw[arr] (7) -- node[midway, left] {$14$} (15);
\draw[arr] (7) -- node[midway, above] {$15$} (14);
\draw[arr] (9) -- node[midway, left] {$16$} (17);
\draw[arr] (9) -- node[midway, right] {$17$} (14);
\draw[arr] (10) -- node[midway, left] {$18$} (14);
\draw[arr] (10) -- node[midway, above] {$19$} (19);
\draw[arr] (12) -- node[midway, above] {$20$} (14);
\draw[arr] (12) -- node[midway, right] {$21$} (21);
\draw[arr] (13) -- node[midway, above] {$22$} (14);
\draw[arr] (13) -- node[midway, above] {$23$} (18);

\draw[arr] (14) -- node[midway, above] {$24$} (15);
\draw[arr] (14) -- node[midway, left] {$25$} (17);
\draw[arr] (14) -- node[midway, left] {$26$} (18);
\draw[arr] (14) -- node[midway, left] {$27$} (19);
\draw[arr] (14) -- node[midway, above] {$28$} (21);
\draw[arr] (14) -- node[midway, above] {$29$} (22);

\draw[arr] (15) -- node[midway, above] {$30$} (conf);
\draw[arr] (17) -- node[midway, right] {$31$} (conf);
\draw[arr] (18) -- node[midway, right] {$32$} (conf);
\draw[arr] (19) -- node[midway, left] {$33$} (conf);
\draw[arr] (22) -- node[midway, above] {$34$} (conf);
\draw[arr] (21) -- node[midway, above] {$35$} (conf);

\draw[red, dashed, line width=0.8pt] (0.75,17.75) -- (17.50,17.75);

\end{circuitikz}
}%
\end{minipage}

\caption{Cascade Lemma ($k \neq l, k = n - 2$) Implication Graph}
\label{fig:dcas2}

\end{figure}

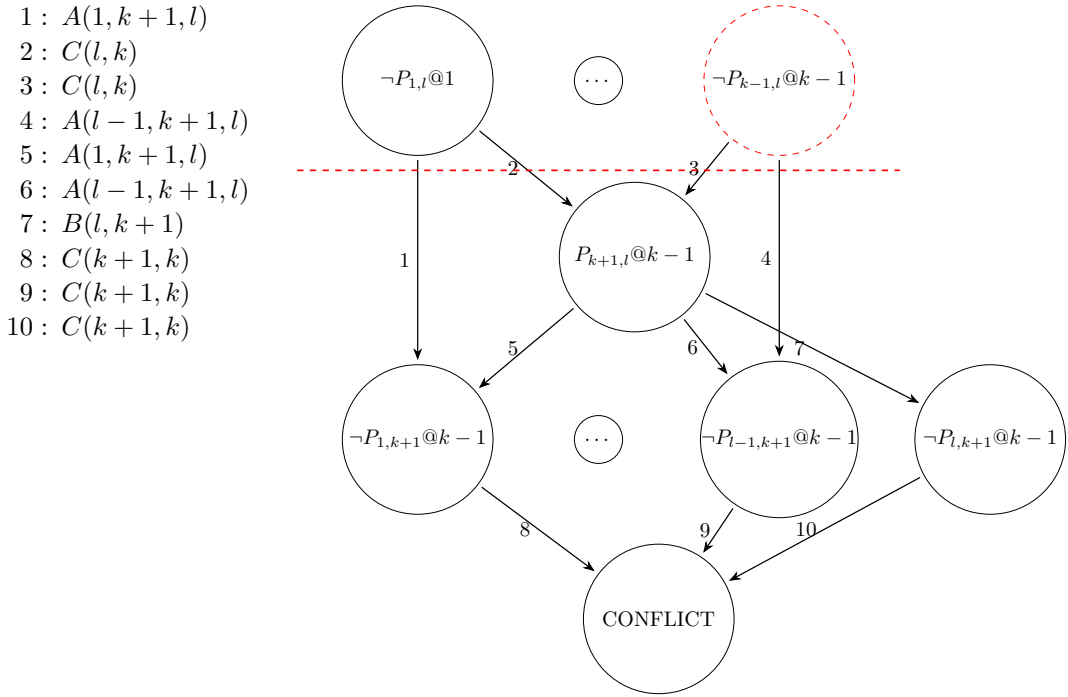
\begin{figure}[H]
\centering
\begin{minipage}[t]{0.25\textwidth}
\vspace{0pt}
\begin{enumerate}
\item[$1:$] $A(1, k + 1, l)$
\item[$2:$] $C(l, k)$
\item[$3:$] $C(l, k)$
\item[$4:$] $A(l - 1, k + 1, l)$
\item[$5:$] $A(1, k + 1, l)$
\item[$6:$] $A(l - 1, k + 1, l)$
\item[$7:$] $B(l, k + 1)$
\item[$8:$] $C(k + 1, k)$
\item[$9:$] $C(k + 1, k)$
\item[$10:$] $C(k + 1, k)$
\end{enumerate}
\end{minipage}
\begin{minipage}[t]{0.73\textwidth}
\vspace{0pt}
\resizebox{1\textwidth}{!}{%
\begin{circuitikz}
\IGsetup

\node[lit] (0)   at (2,10) {$\neg P_{1, l}@1$};
\node[dot] (1)  at (5,10) {$\dots$};
\node[lit, draw=red, dashed] (2)  at (8,10) {$\neg P_{k-1,l}@k-1$};

\node[lit] (3)  at (5.6,6.05) {$P_{k + 1,l}@k-1$};

\node[lit] (4)  at (2,3) {$\neg P_{1,k+1}@k-1$};
\node[dot] (5) at (5, 3) {$\dots$};
\node[lit] (6)  at (8,3) {$\neg P_{l-1, k+1}@k-1$};
\node[lit] (7)   at (11.5,3) {$\neg P_{l, k + 1}@k-1$};

\node[lit] (conf) at (6, 0) {CONFLICT};

\draw[arr] (0) -- node[midway, left] {$1$} (4);
\draw[arr] (0) -- node[midway, right] {$2$} (3);
\draw[arr] (2) -- node[midway, left] {$3$} (3);
\draw[arr] (2) -- node[midway, left] {$4$} (6);
\draw[arr] (3) -- node[midway, right] {$5$} (4);
\draw[arr] (3) -- node[midway, left] {$6$} (6);
\draw[arr] (3) -- node[midway, above] {$7$} (7);
\draw[arr] (4) -- node[midway, left] {$8$} (conf);
\draw[arr] (6) -- node[midway, left] {$9$} (conf);
\draw[arr] (7) -- node[midway, below] {$10$} (conf);

\draw[red, dashed, line width=0.8pt] (0,8.5) -- (10,8.5);

\end{circuitikz}
}%
\end{minipage}

\caption{Cascade Lemma ($k = l, k \neq n - 2$) Implication Graph}
\label{fig:dcas3}

\end{figure}

\begin{figure}[H]
\centering
\begin{minipage}[t]{0.25\textwidth}
\vspace{0pt}
\begin{enumerate}
\item[$1:$] $A(1, k + 1, l)$
\item[$2:$] $C(l, k)$
\item[$3:$] $A(l - 1, k + 1, l)$
\item[$4:$] $A(k - 1, k + 1, l)$
\item[$5:$] $C(l, k)$
\item[$6:$] $A(l + 1, k + 1, l)$
\item[$7:$] $C(l, k)$
\item[$8:$] $A(k, k + 1, l)$
\item[$9:$] $A(1, k + 1, l)$
\item[$10:$] $A(l - 1, k + 1, l)$
\item[$11:$] $A(l + 1, k + 1, l)$
\item[$12:$] $A(k, k + 1, l)$
\item[$13:$] $B(l, k + 1)$
\item[$14:$] $C(k + 1, k)$
\item[$15:$] $C(k + 1, k)$
\item[$16:$] $C(k + 1, k)$
\item[$17:$] $C(k + 1, k)$
\item[$18:$] $C(k + 1, k)$
\end{enumerate}
\end{minipage}
\begin{minipage}[t]{0.73\textwidth}
\vspace{0pt}
\resizebox{1\textwidth}{!}{%
\begin{circuitikz}
\IGsetup

\node[lit] (0)   at (1.75,20) {$\neg P_{1,l}@1$};
\node[dot] (1)   at (4.75,20) {$\dots$};
\node[lit] (2)  at (7.75,20) {$\neg P_{l-1,l}@l\!-\!1$};
\node[lit] (3)  at (10.75,20) {$\neg P_{l+1,l}@l$};
\node[dot] (4)   at (13.75,20) {$\dots$};
\node[lit, draw=red, dashed] (5)  at (16.75,20) {$\neg P_{k,l}@k\!-\!1$};

\node[lit] (6)  at (9.25,16) {$P_{k+1,l}@k\!-\!1$};

\node[lit] (7)   at (1.75,12) {$\neg P_{1,k+1}@k\!-\!1$};
\node[dot] (8)   at (4.75,12) {$\dots$};
\node[lit] (9)  at (7.75,12) {$\neg P_{l-1,k+1}@k\!-\!1$};
\node[lit] (10)  at (10.75,12) {$\neg P_{l+1,k+1}@k\!-\!1$};
\node[dot] (11)   at (12.75,12) {$\dots$};
\node[lit] (12)  at (14.75,12) {$\neg P_{k,k+1}@k\!-\!1$};
\node[lit] (13)  at (17.75,12) {$\neg P_{l,k+1}@k\!-\!1$};

\node[lit] (conf) at (9.00, 8) {CONFLICT};

\draw[arr] (0) -- node[midway, left] {$1$} (7);
\draw[arr] (0) -- node[midway, below] {$2$} (6);
\draw[arr] (2) -- node[midway, left] {$3$} (9);
\draw[arr] (2) -- node[midway, left] {$4$} (6);
\draw[arr] (3) -- node[midway, left] {$5$} (6);
\draw[arr] (3) -- node[midway, right] {$6$} (10);
\draw[arr] (5) -- node[midway, below] {$7$} (6);
\draw[arr] (5) -- node[midway, left] {$8$} (12);
\draw[arr] (6) -- node[midway, above] {$9$} (7);
\draw[arr] (6) -- node[midway, left] {$10$} (9);
\draw[arr] (6) -- node[midway, left] {$11$} (10);
\draw[arr] (6) -- node[midway, left] {$12$} (12);
\draw[arr] (6) -- node[midway, right] {$13$} (13);

\draw[arr] (7) -- node[midway, below] {$14$} (conf);
\draw[arr] (9) -- node[midway, left] {$15$} (conf);
\draw[arr] (10) -- node[midway, left] {$16$} (conf);
\draw[arr] (12) -- node[midway, left] {$17$} (conf);
\draw[arr] (13) -- node[midway, below] {$18$} (conf);

\draw[red, dashed, line width=0.8pt] (0.75,18.25) -- (17.50,18.25);

\end{circuitikz}
}%
\end{minipage}
\caption{Cascade Lemma $(l < k < n - 2)$ Implication Graph}
\label{fig:dcas4}

\end{figure}

\begin{figure}[H]
\centering
\begin{minipage}[t]{0.25\textwidth}
\vspace{0pt}
\begin{enumerate}
\item[$1:$] $A(1, k+1, l)$
\item[$2:$] $C(l, k)$
\item[$3:$] $C(l, k)$
\item[$4:$] $A(k - 1, k + 1, l)$
\item[$5:$] $A(1, k+1, l)$
\item[$6:$] $A(k - 1, k + 1, l)$
\item[$7:$] $B(k, k + 1)$
\item[$8:$] $A(1, k + 2, k + 1)$
\item[$9:$] $C(k + 1, k + 1)$
\item[$10:$] $C(k + 1, k + 1)$
\item[$11:$] $A(k - 1, k + 2, k + 1)$
\item[$12:$] $C(k + 1, k + 1)$
\item[$13:$] $A(k, k + 2, k + 1)$
\item[$14:$] $A(1, k + 2, k + 1)$
\item[$15:$] $A(k - 1, k + 2, k + 1)$
\item[$16:$] $A(k, k + 2, k + 1)$
\item[$17:$] $B(k + 1, k + 2)$
\item[$18:$] $C(k + 2, k + 1)$
\item[$19:$] $C(k + 2, k + 1)$
\item[$20:$] $C(k + 2, k + 1)$
\item[$21:$] $C(k + 2, k + 1)$
\end{enumerate}
\end{minipage}
\begin{minipage}[t]{0.7\textwidth}
\vspace{0pt}
\resizebox{1\textwidth}{!}{%
\begin{circuitikz}
\IGsetup

\node[lit] (0)   at (1.75,16) {$\neg P_{1,l}@1$};
\node[dot] (1)   at (4.75,16) {$\dots$};
\node[lit, draw=red, dashed] (2)  at (7.75,16) {$\neg P_{k-1,l}@k\!-\!1$};

\node[lit] (3)  at (4.75,12) {$P_{k+1,l}@k\!-\!1$};

\node[lit] (4)   at (1.75,8) {$\neg P_{1,k+1}@k\!-\!1$};
\node[dot] (5)   at (4.75,8) {$\dots$};
\node[lit] (6)  at (7.75,8) {$\neg P_{k - 1,k+1}@k\!-\!1$};
\node[lit] (7)  at (10.75,8) {$\neg P_{k,k+1}@k\!-\!1$};

\node[lit] (8)  at (5.75,5) {$P_{k+2,k+1}@k\!-\!1$};

\node[lit] (9)   at (1.75,1) {$\neg P_{1,k+2}@k\!-\!1$};
\node[dot] (10)   at (4.75,1) {$\dots$};
\node[lit] (12) at (7.75, 1) {$\neg P_{k - 1, k + 2}@k-1$};
\node[lit] (13) at (10.75, 1) {$\neg P_{k, k + 2}@k-1$};
\node[lit] (11)  at (15.75,1) {$\neg P_{k+1,k+2}@k\!-\!1$};

\node[lit] (conf) at (5.75, -3) {CONFLICT};

\draw[arr] (0) -- node[midway, left] {$1$} (4);
\draw[arr] (0) -- node[midway, left] {$2$} (3);
\draw[arr] (2) -- node[midway, left] {$3$} (3);
\draw[arr] (2) -- node[midway, left] {$4$} (6);
\draw[arr] (3) -- node[midway, left] {$5$} (4);
\draw[arr] (3) -- node[midway, left] {$6$} (6);
\draw[arr] (3) -- node[midway, left] {$7$} (7);
\draw[arr] (4) -- node[midway, left] {$8$} (9);
\draw[arr] (4) -- node[midway, left] {$9$} (8);
\draw[arr] (6) -- node[midway, left] {$10$} (8);
\draw[arr] (7) -- node[midway, right] {$12$} (8);
\draw[arr] (6) -- node[midway, left] {$11$} (12);
\draw[arr] (7) -- node[midway, left] {$13$} (13);

\draw[arr] (8) -- node[midway, left] {$14$} (9);
\draw[arr] (8) -- node[midway, left] {$17$} (11);
\draw[arr] (8) -- node[midway, left] {$15$} (12);
\draw[arr] (8) -- node[midway, left] {$16$} (13);

\draw[arr] (9) -- node[midway, left] {$18$} (conf);
\draw[arr] (11) -- node[midway, left] {$21$} (conf);
\draw[arr] (12) -- node[midway, left] {$19$} (conf);
\draw[arr] (13) -- node[midway, left] {$20$} (conf);

\draw[red, dashed, line width=0.8pt] (0.75,14.25) -- (10,14.25);

\end{circuitikz}
}%
\end{minipage}
\caption{Cascade Lemma ($k = l = n - 2$) Implication Graph}
\label{fig:dcas5}

\end{figure}
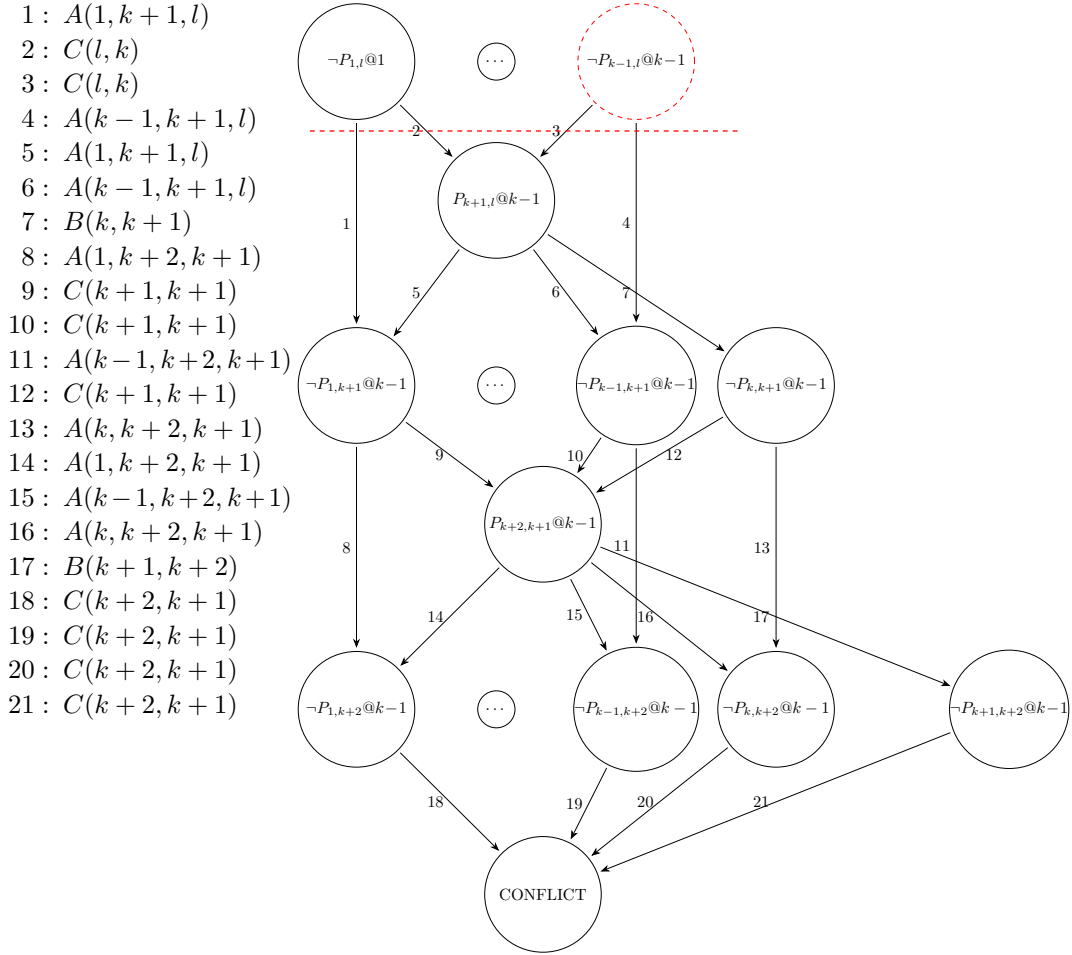

\noindent The \textbf{Equal Score Invariant} holds by Case 2 for the Cascade Lemma.
\begin{lemma}[{\bf Descending Lemma}] \label{lem:descending}
Assume that $\Gamma$ contains the \textbf{Head} clauses followed by $C(j, n - 2), ..., C(j, j - 1)$ for $j = n - 2, ..., l$, for $l > 2$. Additionally, assume that the \textbf{Equal Score Invariant} holds for all of those clauses. Then, the next learned clause is $C(l - 1, n - 2)$, and the \textbf{Equal Score Invariant} still holds.
\end{lemma}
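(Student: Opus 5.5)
The plan is to follow the same template as the Head and Cascade cases. I would reconstruct the trail the solver builds after learning $C(l,l-1)$, exhibit the implication graph at the first conflict, and read off the 1UIP clause. The \textbf{Equal Score Invariant} part is the easy bookkeeping step, so I settle it first. The target clause $C(l-1,n-2)$ consists only of variables in column $l-1$. Since $l>2$, we have $2 \le l-1 \le n-3$. Every clause learned so far lies in column $1$ (the \textbf{Head}) or in some column $j \ge l$ (the completed triangular blocks). So no variable of $C(l-1,n-2)$ has been bumped before, and Case 1 applies. All its variables then get the same fresh score, which is what the tie-breaking rule needs in the next step.

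For the trail, I use the VMTF reading of the Focus Lemma. Variables are ranked by the most recent learned clause containing them, with ties broken column-major. The top-ranked block is $C(l,l-1)=P_{1,l}\vee\dots\vee P_{l-1,l}$. The solver therefore decides $P_{1,l},\dots,P_{l-2,l}$ false at levels $1,\dots,l-2$. At level $l-2$, the clause $C(l,l-1)$ itself becomes unit and propagates $P_{l-1,l}$. From $P_{l-1,l}$ and the falsified $P_{i,l}$, the transitivity clauses $A(i,l-1,l)$ propagate $\neg P_{i,l-1}$ for $i=1,\dots,l-2$, and $B(l-1,l)$ propagates $\neg P_{l,l-1}$.

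I then continue down the ranking. The next blocks are the older clauses $C(l,l),\dots,C(l,n-2)$ and then the column-$(l+1)$ block, and so on. I must check two things at each step. First, which of their variables are still unassigned and get decided false. Second, that none of the earlier learned clauses (columns $j>l$, or the \textbf{Head}) becomes unit prematurely, which I would verify against each $C(j,\cdot)$ using the current partial assignment. The expected shape, mirroring Figure~\ref{fig:dcas1}, is as follows:
\begin{enumerate}
\item Column $l-1$ ends up with $P_{1,l-1},\dots,P_{n-2,l-1}$ all false, with $P_{l,l-1}$ among them.
\item $D(l-1)$ then forces $P_{n,l-1}$ true.
\item The transitivity clauses $A(i,n,l-1)$ force $\neg P_{i,n}$ for every falsified $P_{i,l-1}$.
\item $B(l-1,n)$ forces $\neg P_{l-1,n}$.
\item These falsify all of $D(n)$, giving the conflict.
\end{enumerate}
To confirm this shape, I would first recompute the trail for $OP_6$ and $OP_7$ with $l=3$ or $l=4$ and match it against the listed clause sequences.

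The conflict analysis proceeds as in the earlier figures. Every $\neg P_{i,n}$ node and the $P_{n,l-1}$ node sit at the last decision level, and they all pass through $P_{n,l-1}$. Resolving $D(n)$ against the reasons $A(i,n,l-1)$ and $B(l-1,n)$ yields $\neg P_{n,l-1} \vee \bigvee_{i} P_{i,l-1}$. Resolving once more with $D(l-1)$ gives exactly $P_{1,l-1}\vee\dots\vee P_{n-2,l-1}$ without $P_{l-1,l-1}$, which is $C(l-1,n-2)$. I must also check that after this step exactly one literal remains at the conflict level, namely the last decided or propagated $P_{\cdot,l-1}$; that literal is the UIP marked in red in the analogous figures.

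The main obstacle is the middle step. One issue is pinning down the decision order through the older learned clauses. The other is showing that the last-level literals of column $l-1$ reduce to a single UIP, that is, that the earlier-level propagated $\neg P_{i,l-1}$ for $i\le l-2$ really sit at lower levels and are not re-derived at the conflict level. If the direct trace is too messy, I would fall back on an invariant. The claim would be that after finishing column $l$, the assignment forced on columns $\ge l$ is exactly the one in which each of those elements is dominated only via column $n$. Then only column $l-1$ remains free, and the graph collapses to the two-layer shape of Figure~\ref{fig:head1}.
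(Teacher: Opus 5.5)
Your Equal Score Invariant bookkeeping (Case 1) matches Figure~\ref{fig:dcas6}. So do the first $l-2$ decisions in column $l$ with the propagation of $P_{l-1,l}$, $\neg P_{i,l-1}$ ($i\le l-2$) and $\neg P_{l,l-1}$, and the final stage ($D(l-1)$ forcing $P_{n,l-1}$, then $A(i,n,l-1)$ and $B(l-1,n)$ falsifying $D(n)$, resolved back to a column-$(l-1)$ clause). The gap is the middle of the trail, which is the real content of the lemma. You assert that column $l-1$ ``ends up'' falsified and defer the reason to small cases, while suggesting the decisions might run on into the column-$(l+1)$ block.

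The missing idea is that the propagated $P_{l-1,l}$ is a bridge from column $l$ to column $l-1$. Every $C(l,k)$ with $k\ge l-1$ contains $P_{l-1,l}$, so all of them are satisfied after level $l-2$, and no clause assigns $P_{i,l}$ for $i>l$. The ranking therefore makes the solver decide exactly $\neg P_{l+1,l},\dots,\neg P_{n-1,l}$, the leftover variables of $C(l,l),\dots,C(l,n-2)$ in that order, at levels $l-1,\dots,n-3$. Each decision $\neg P_{i,l}$ immediately yields $\neg P_{i,l-1}$ through $A(i,l-1,l)$. Only after the last one are all $n-2$ literals of column $l-1$ other than $P_{n,l-1}$ false. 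So $D(l-1)$ fires at level $n-3$, and the conflict occurs while the solver is still deciding in column $l$; the column-$(l+1)$ block is never reached.

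This also settles the points you left open. No literal of column $1$ or of columns $l+1,\dots,n-2$ is ever assigned, so no Head clause or older block can become unit; this is the completeness check. The only column-$(l-1)$ literal at the conflict level is $\neg P_{n-1,l-1}$, since the others lie at level $l-2$ or at levels $i-2\le n-4$. As you guessed, this node, not $P_{n,l-1}$, is the 1UIP, because it also feeds $\neg P_{n-1,n}$ directly through $A(n-1,n,l-1)$.

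Your fallback invariant cannot replace this argument. The solver restarts after every conflict, and for $l>2$ no unit clause has been learned yet, so nothing is ``forced on columns $\ge l$'' at the start of the run. The resulting graph has the three-stage shape of Figure~\ref{fig:dcas1} (column $l$, then column $l-1$, then column $n$), not the two-layer shape of Figure~\ref{fig:head1}.

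Finally, fix the indexing: $C(l-1,n-2)=\bigvee_{i\in[n-1]\setminus\{l-1\}}P_{i,l-1}$. Your $P_{1,l-1}\vee\dots\vee P_{n-2,l-1}$ without $P_{l-1,l-1}$ has only $n-3$ literals and omits $P_{n-1,l-1}$, which is precisely the conflict-level literal. With only that set falsified, $D(l-1)$ would not become unit.
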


\begin{figure}[H]
\label{fig:descending}
\centering
\resizebox{0.95\textwidth}{!}{%
\begin{circuitikz}
\IGsetup

\node[lit] (0)   at (1.75,32) {$\neg P_{1,l}@1$};
\node[dot] (1)   at (5.50,32) {$\dots$};
\node[lit] (2)  at (10.50,32) {$\neg P_{l-2,l}@l\!-\!2$};

\node[lit] (3)  at (12.5,28) {$P_{l-1,l}@l\!-\!2$};

\node[lit] (4) at (1.75,24) {$\neg P_{1,l-1}@l\!-\!2$};
\node[dot] (5) at (4.75,24) {$\dots$};
\node[lit] (6) at (7.75,24) {$\neg P_{l - 2,l - 1}@l\!-\!2$};
\node[lit] (7) at (12.5,24) {$\neg P_{l,l-1}@l\!-\!2$};

\node[lit] (8)  at (17.5,20) {$\neg P_{l+1,l}@l\!-\!1$};
\node[dot] (9)  at (20,20) {$\dots$};
\node[lit] (10) at (25.5,20) {$\neg P_{n-1,l}@n\!-\!3$};

\node[lit] (11)  at (17.5,16) {$\neg P_{l+1,l-1}@l\!-\!1$};
\node[dot] (12)  at (20,16) {$\dots$};
\node[lit, draw=red, dashed] (13) at (25.5,16) {$\neg P_{n-1,l-1}@n\!-\!3$};

\node[lit] (14) at (10.75,12) {$P_{n,l-1}@n\!-\!3$};

\node[lit] (15) at (2,7) {$\neg P_{1,n}@n\!-\!3$};
\node[dot] (16) at (4.5, 7) {$\dots$};
\node[lit] (17) at (8.5,7) {$\neg P_{l-2,n}@n\!-\!3$};
\node[lit] (18) at (12.5,7) {$\neg P_{l,n}@n\!-\!3$};
\node[lit] (19) at (15.5,7) {$\neg P_{l+1,n}@n\!-\!3$};
\node[dot] (20) at (17.5, 7) {$\dots$};
\node[lit] (21) at (19.5,7) {$\neg P_{n-1,n}@n\!-\!3$};
\node[lit] (22) at (24.5,7) {$\neg P_{l-1,n}@n\!-\!3$};

\node[lit] (23) at (0, 10.5) {$P_{n, l}@n-3$};

\node[lit] (conf) at (14, 2) {CONFLICT};

\draw[arr] (0) -- node[midway, left] {$1$} (4);
\draw[arr] (0) -- node[midway, above] {$2$} (3);
\draw[arr] (2) -- node[midway, left] {$3$} (6);
\draw[arr] (2) -- node[midway, left] {$4$} (3);
\draw[arr] (3) -- node[midway, above] {$5$} (4);
\draw[arr] (3) -- node[midway, left] {$6$} (6);
\draw[arr] (3) -- node[midway, left] {$7$} (23);
\draw[arr] (3) -- node[midway, left] {$8$} (7);
\draw[arr] (3) -- node[midway, left] {$9$} (11);
\draw[arr] (3) -- node[midway, left] {$10$} (13);
\draw[arr] (4) -- node[midway, left] {$11$} (15);
\draw[arr] (4) -- node[midway, left] {$12$} (14);
\draw[arr] (6) -- node[midway, left] {$13$} (17);
\draw[arr] (6) -- node[midway, left] {$14$} (14);
\draw[arr] (7) -- node[midway, left] {$15$} (14);
\draw[arr] (7) -- node[midway, left] {$16$} (18);
\draw[arr] (8) -- node[midway, left] {$17$} (11);
\draw[arr] (10) -- node[midway, left] {$18$} (13);
\draw[arr] (11) -- node[midway, below] {$19$} (14);
\draw[arr] (11) -- node[midway, left] {$20$} (19);
\draw[arr] (13) -- node[midway, below] {$21$} (14);
\draw[arr] (13) -- node[midway, left] {$22$} (21);
\draw[arr] (14) -- node[midway, above] {$23$} (23);
\draw[arr] (14) -- node[midway, left] {$24$} (15);
\draw[arr] (14) -- node[midway, left] {$25$} (17);
\draw[arr] (14) -- node[midway, left] {$26$} (18);
\draw[arr] (14) -- node[midway, left] {$27$} (19);
\draw[arr] (14) -- node[midway, left] {$28$} (21);
\draw[arr] (14) -- node[midway, left] {$29$} (22);

\draw[arr] (15) -- node[midway, below] {$30$} (conf);
\draw[arr] (17) -- node[midway, left] {$31$} (conf);
\draw[arr] (18) -- node[midway, left] {$32$} (conf);
\draw[arr] (19) -- node[midway, left] {$33$} (conf);
\draw[arr] (21) -- node[midway, left] {$34$} (conf);
\draw[arr] (22) -- node[midway, left] {$35$} (conf);

\draw[red, dashed, line width=0.8pt] (0.75,14.25) -- (26,14.25);

\end{circuitikz}
}%
\caption{Descending Lemma Implication Graph}
\label{fig:dcas6}
\end{figure}

\begin{minipage}[t]{0.3\textwidth}
\begin{enumerate}
\item[$1:$] $A(1, l - 1, l)$
\item[$2:$] $C(l, l - 1)$
\item[$3:$] $A(l - 2, l - 1, l)$
\item[$4:$] $C(l, l - 1)$
\item[$5:$] $A(1, l - 1, l)$
\item[$6:$] $A(l - 2, l - 1, l)$
\item[$7:$] $A(n, l - 1, l)$
\item[$8:$] $B(l - 1, l)$
\item[$9:$] $A(l + 1, l - 1, l)$
\item[$10:$] $A(n - 1, l - 1, l)$
\item[$11:$] $A(1, n, l - 1)$
\item[$12:$] $D(l - 1)$
\end{enumerate}
\end{minipage}
\hfill
\begin{minipage}[t]{0.3\textwidth}
\begin{enumerate}
\item[$13:$] $A(l - 2, n, l - 1)$
\item[$14:$] $D(l - 1)$
\item[$15:$] $D(l - 1)$
\item[$16:$] $A(l, n, l - 1)$
\item[$17:$] $A(l + 1, l - 1, l)$
\item[$18:$] $A(n - 1, l - 1, l)$
\item[$19:$] $D(l - 1)$
\item[$20:$] $A(l + 1, n, l - 1)$
\item[$21:$] $D(l - 1)$
\item[$22:$] $A(n - 1, n, l - 1)$
\item[$23:$] $A(n, l - 1, l)$
\item[$24:$] $A(1, n, l - 1)$
\end{enumerate}
\end{minipage}
\begin{minipage}[t]{0.3\textwidth}
\begin{enumerate}
\item[$25:$] $A(l - 2, n, l - 1)$
\item[$26:$] $A(l, n, l - 1)$
\item[$27:$] $A(l + 1, n, l - 1)$
\item[$28:$] $A(n - 1, n, l - 1)$
\item[$29:$] $B(l - 1, n)$
\item[$30:$] $D(n)$
\item[$31:$] $D(n)$
\item[$32:$] $D(n)$
\item[$33:$] $D(n)$
\item[$34:$] $D(n)$
\item[$35:$] $D(n)$
\end{enumerate}
\end{minipage}
\noindent The \textbf{Equal Score Invariant} holds by Case 1 for the Descending Lemma.

We now move on to the \textbf{Tail}. We first notice that the last learned clause in the Descending Cascade is always the unit clause $C(2, 1) = P_{1, 2}$. There are two things we need to show. The first is that the Tail begins, i.e., we learn $C(1, n - 4)$ after learning $P_{1, 2}$ (Lemma \ref{lem:pivot}). The second is that after learning $C(1, n - 4), ..., C(1, k)$ for $k > 2$, the next learned clause is $C(1, k - 1)$ (Lemma \ref{lem:tail}). 

\begin{lemma}[{\bf Pivot Lemma}] \label{lem:pivot}
Assume that $\Gamma$ contains the \textbf{Head} clauses followed by the \textbf{Descending Cascade} clauses, the last of which is the unit clause $P_{1, 2}$. Additionally, assume that the \textbf{Equal Score Invariant} holds for all of those clauses. The next learned clause is $C(1, n - 4)$, and the \textbf{Equal Score Invariant} still holds.
\end{lemma}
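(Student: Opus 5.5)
The plan is to follow the same template as the Head, Cascade and Descending Lemmas: determine the state after the restart, the sequence of decisions forced by the Focus Lemma (in its VMTF form) and the tie-breaking rule, the propagations leading to a conflict, and finally the 1UIP cut. One caveat first: since $P_{2,1}$ is false at level $0$, the claim that the clause learned is literally $C(1,n-4)$ assumes the learned clause keeps that level-$0$ literal, i.e.\ the analysis does not strip root-level literals. I take that to be the convention in force throughout this section; if the solver did strip them, the clause would be $C(1,n-4)$ minus $P_{2,1}$.

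\textbf{Level $0$ and branching order.} After the restart the unit clause $P_{1,2}$ is propagated. Then $B(1,2)$ forces $\neg P_{2,1}$, and the clauses $A(i,1,2)=\neg P_{i,1}\lor\neg P_{1,2}\lor P_{i,2}$ become binary implications $\neg P_{i,2}\Rightarrow\neg P_{i,1}$. Every learned clause $C(2,k)$ is now satisfied and propagates nothing. I will check that no other learned clause or axiom fires at level $0$; the remaining learned clauses live in columns $3,\dots,n-2$ and in column $1$, and each still has at least two unassigned literals.

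Every variable of the newest learned clause is assigned, so the Focus Lemma says nothing directly. I therefore use its VMTF reformulation: rank each variable by the most recent learned clause containing it, and break ties column-major. By the \textbf{Equal Score Invariant}, the variables of each block tie. This gives the decision order: first $P_{3,2}$ (last seen in $C(2,2)$), then $P_{4,2}$, and so on up to $P_{n-1,2}$, all set to false. Each decision $\neg P_{i,2}$ immediately propagates $\neg P_{i,1}$ through $A(i,1,2)$.

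\textbf{Locating the conflict.} The key step is to find the exact decision level at which a conflict first occurs, and this is where I expect the main difficulty. The column-$1$ literals are accumulating, so the stored clauses $C(1,n-3)$ and then $C(1,n-2)$ will eventually become unit.
\begin{itemize}
\item Once $\neg P_{2,1},\dots,\neg P_{n-3,1}$ hold, which happens at the decision $\neg P_{n-3,2}$, the clause $C(1,n-3)$ propagates $P_{n-2,1}$.
\item Then $A(n-2,1,2)$ gives $P_{n-2,2}$, and $B(1,n-2)$ together with the transitivity clauses through column $n-2$ should propagate $\neg P_{i,n-2}$ for all the relevant $i$.
\item Next the learned clauses of the $(n-2)$-block, ending with $C(n-2,n-3)$, together with $D(n-2)$, force $P_{n,n-2}$ or $P_{n-1,n-2}$, and this cascades into a violation of $D(n)$ or $B$, exactly as in Figures~\ref{fig:head2} and~\ref{fig:dcas1}.
\end{itemize}
I will draw this implication graph explicitly, with the same numbered-edge style, and check level by level that no earlier conflict arises. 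In particular, the decisions on $P_{3,2},\dots,P_{n-4,2}$ must only produce the implications $\neg P_{i,1}$ and nothing else. That check needs $n\ge 6$ and the specific shapes of the learned clauses in columns $3,\dots,n-2$.

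\textbf{1UIP cut and invariant.} All literals reached at the conflict level are consequences of the single propagated literal $\neg P_{n-3,1}$, which comes from the last decision $\neg P_{n-3,2}$ via a binary clause. Tracing the conflict backwards, every path through the propagated $P_{n-2,1}$ should pass through $\neg P_{n-3,1}$, so $\neg P_{n-3,1}$ is the first UIP. The cut then collects the falsified column-$1$ literals $\neg P_{2,1},\dots,\neg P_{n-3,1}$ that fed $C(1,n-3)$ and the transitivity clauses. The learned clause is therefore $P_{2,1}\lor\dots\lor P_{n-3,1}=C(1,n-4)$. I expect the $\neg P_{i,2}$ decisions to be absorbed behind the UIP for the conflict level and not to appear at all for earlier levels; if one did appear, the claim would fail. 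This is the point I would verify most carefully.

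Finally, each $P_{i,1}$ with $2\le i\le n-3$ last occurred in $C(1,n-3)$ and in no later clause, because every Descending Cascade clause has column $\ge 2$. So the \textbf{Equal Score Invariant} holds by Case 2.
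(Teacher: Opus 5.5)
Your setup agrees with the paper's proof:
\begin{itemize}
\item at level $0$ you get $P_{1,2}$ and $\neg P_{2,1}$;
\item the VMTF order forces the decisions $\neg P_{3,2},\dots,\neg P_{n-3,2}$ at levels $1,\dots,n-5$, each propagating $\neg P_{i,1}$ through $A(i,1,2)$;
\item at level $n-5$ the clause $C(1,n-3)$ propagates $P_{n-2,1}$;
\item the invariant holds by Case~2;
\item your level-$0$ caveat matches the paper's convention, since its graph keeps $\neg P_{2,1}@0$ on the reason side.
\end{itemize}
The gap is in the step you flagged as key, locating the conflict, and what you wrote there is wrong. You predict that the column-$(n-2)$ learned clauses and $D(n-2)$ force $P_{n-1,n-2}$ or $P_{n,n-2}$ and cascade into a violation of $D(n)$ or $B$, as in the Head and first-cascade graphs. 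Saturating $P_{n-2,1}$ gives $\neg P_{1,n-2}$ via $B(1,n-2)$, and $\neg P_{i,n-2}$ for $2\le i\le n-3$ via $A(i,n-2,1)$, because $P_{i,1}$ is already false. These are precisely the negations of all literals of $C(n-2,n-3)=P_{1,n-2}\lor\dots\lor P_{n-3,n-2}$, the last clause of the $j=n-2$ block. So that learned clause is itself the conflict clause, falsified at once. No literal in columns $n-1$ or $n$ is ever involved, and the $P_{n-2,2}$ you include is irrelevant by the Saturation Condition. For $n=6$ this is just $C(4,3)=P_{1,4}\lor P_{2,4}\lor P_{3,4}$ being falsified right after $P_{4,1}$.

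This matters because you asserted the 1UIP conclusion for the graph you had in mind rather than deriving it from the actual one. You also left open whether column-$2$ decisions leak into the learned clause. With the correct graph the check is short:
\begin{itemize}
\item The conflict-level nodes are $\neg P_{n-3,2}$, $\neg P_{n-3,1}$, $P_{n-2,1}$ and the $\neg P_{i,n-2}$.
\item $P_{n-2,1}$ is not a UIP, because the edge $\neg P_{n-3,1}\to\neg P_{n-3,n-2}$ (clause $A(n-3,n-2,1)$) bypasses it. This bypass is why $\neg P_{n-3,1}$ is the \emph{first} UIP; the fact that every path passes through $\neg P_{n-3,1}$ only shows it is \emph{a} UIP.
\item The earlier-level antecedents of the conflict side are exactly $\neg P_{2,1},\dots,\neg P_{n-4,1}$, coming from $C(1,n-3)$ and the clauses $A(i,n-2,1)$. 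The clause $B(1,n-2)$ adds no antecedent besides $P_{n-2,1}$, and neither $P_{1,2}$ nor any $\neg P_{i,2}$ appears.
\end{itemize}
Hence the learned clause is $P_{2,1}\lor\dots\lor P_{n-3,1}=C(1,n-4)$. With the cascade step replaced by this observation, your outline becomes the paper's proof.
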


\begin{figure}[H]
\centering
\begin{minipage}[t]{0.25\textwidth}
\vspace{0pt}
\begin{enumerate}
\item[$1:$] $B(1, 2)$
\item[$2:$] $A(3, 1, 2)$
\item[$3:$] $A(n - 3, 1, 2)$
\item[$4:$] $A(3, 1, 2)$
\item[$5:$] $A(n - 3, 1, 2)$
\item[$6:$] $A(2, n - 2, 1)$
\item[$7:$] $C(1, n - 3)$
\item[$8:$] $A(3, n - 2, 1)$
\item[$9:$] $C(1, n - 3)$
\item[$10:$] $C(1, n - 3)$
\item[$11:$] $A(n - 3, n - 2, 1)$
\item[$12:$] $A(2, n - 2, 1)$
\item[$13:$] $A(3, n - 2, 1)$
\item[$14:$] $A(n - 3, n - 2, 1)$
\item[$15:$] $B(1, n - 2)$
\item[$16:$] $A(l, n, l - 1)$
\item[$17:$] $A(l + 1, l - 1, l)$
\item[$18:$] $A(n - 1, l - 1, l)$
\item[$19:$] $C(n - 2, n - 3)$
\end{enumerate}
\end{minipage}\hfill
\begin{minipage}[t]{0.75\textwidth}
\vspace{0pt}
\resizebox{1\textwidth}{!}{%
\begin{circuitikz}
\IGsetup

\node[lit] (0)   at (1.75,16) {$P_{1,2}@0$};
\node[lit] (1)  at (5.75,16) {$\neg P_{3,2}@1$};
\node[dot] (2)   at (8.75,16) {$\dots$};
\node[lit] (3)  at (11.75,16) {$\neg P_{n-3,2}@n\!-\!5$};

\node[lit] (4)  at (1,12) {$\neg P_{2,1}@0$};
\node[lit] (5)  at (4.75,12) {$\neg P_{3,1}@1$};
\node[dot] (6)   at (7.75,12) {$\dots$};
\node[lit, draw=red, dashed] (7)  at (10.75,12) {$\neg P_{n-3,1}@n\!-\!5$};

\node[lit] (8)  at (10,8) {$P_{n-2,1}@n\!-\!5$};

\node[lit] (9)  at (3.75,4) {$\neg P_{2,n-2}@n\!-\!5$};
\node[lit] (10)  at (6.75,4) {$\neg P_{3,n-2}@n\!-\!5$};
\node[dot] (11)   at (9.25,4) {$\dots$};
\node[lit] (12)  at (12.75,4) {$\neg P_{n-3,n-2}@n\!-\!5$};
\node[lit] (13)  at (15.75,4) {$\neg P_{1,n-2}@n\!-\!5$};

\node[lit] (conf) at (9.00, 0) {CONFLICT};

\draw[arr] (0) -- node[midway, left] {$1$} (4);
\draw[arr] (0) -- node[midway, left] {$2$} (5);
\draw[arr] (0) -- node[midway, above] {$3$} (7);
\draw[arr] (1) -- node[midway, left] {$4$} (5);
\draw[arr] (3) -- node[midway, left] {$5$} (7);
\draw[arr] (4) -- node[midway, left] {$6$} (9);
\draw[arr] (4) -- node[midway, below] {$7$} (8);
\draw[arr] (5) -- node[midway, left] {$8$} (10);
\draw[arr] (5) -- node[midway, right] {$9$} (8);
\draw[arr] (7) -- node[midway, left] {$10$} (8);
\draw[arr] (7) -- node[midway, right] {$11$} (12);
\draw[arr] (8) -- node[midway, left] {$12$} (9);
\draw[arr] (8) -- node[midway, left] {$13$} (10);
\draw[arr] (8) -- node[midway, left] {$14$} (12);
\draw[arr] (8) -- node[midway, right] {$15$} (13);
\draw[arr] (9) -- node[midway, left] {$16$} (conf);
\draw[arr] (10) -- node[midway, left] {$17$} (conf);
\draw[arr] (12) -- node[midway, left] {$18$} (conf);
\draw[arr] (13) -- node[midway, left] {$19$} (conf);

\draw[red, dashed, line width=0.8pt] (0.75,10.4) -- (15.50,10.4);

\end{circuitikz}
}%
\end{minipage}

\caption{Pivot Lemma Implication Graph}
\label{fig:pivot}
\end{figure}

\noindent The \textbf{Equal Score Invariant} holds by Case 2 for the Pivot Lemma.

\begin{lemma}[{\bf Tail Lemma}] \label{lem:tail}
Assume that $\Gamma$ contains the \textbf{Head clauses} followed by the \textbf{Descending Cascade} clauses followed by the clauses $C(1, n - 4), ..., C(1, k)$ for $k > 2$. Additionally, assume that the \textbf{Equal Score Invariant} holds for all of those clauses. The next learned clause is $C(1, k - 1)$, and the \textbf{Equal Score Invariant} still holds.
\end{lemma}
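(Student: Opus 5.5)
By the extended form of the Focus Lemma (VMTF ranking) and the \textbf{Equal Score Invariant}, the solver's next decisions after the restart are exactly the unassigned variables of the most recent learned clause $C(1,k)=P_{2,1}\lor\dots\lor P_{k+1,1}$. They all carry the same score, so the column-major tie-break processes them in increasing row order, and each is set to false. The plan is to trace this run to a conflict on a Descending Cascade clause and then read off the 1UIP clause, mirroring the structure of Figure~\ref{fig:dcas3}.

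\emph{Level $0$.} The unit $P_{1,2}$ propagates $\neg P_{2,1}$ through $B(1,2)$. I will check that nothing else is forced at level $0$. The clauses $A(i,1,2)$ only encode the implication $P_{i,1}\to P_{i,2}$, and the clauses $A(1,2,i)$ only encode $P_{2,i}\to P_{1,i}$. Neither becomes unit at level $0$, and no other learned clause becomes unit either.

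\emph{Decision levels.} The solver then decides $\neg P_{3,1},\dots,\neg P_{k,1}$ at levels $1,\dots,k-2$. I must verify that no intermediate propagation or conflict occurs while these decisions are made. Falsified $P_{i,1}$ literals do not trigger any transitivity or antisymmetry clause. The only clauses that can shrink are the column-1 clauses $C(1,k'')$ and $D(1)$. For each such clause with $k''\ge k$ (learned Head and Tail clauses and $D(1)$), at least $P_{k+1,1}$ remains unassigned until the last decision. After that decision, $C(1,k)$ becomes unit and propagates $P_{k+1,1}$ at level $k-2$. At this point $C(1,k+1)$ and the larger column-1 clauses are satisfied.

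\emph{Propagation at level $k-2$.} From $P_{k+1,1}$ we get:
\begin{itemize}
\item $\neg P_{1,k+1}$ via $B(1,k+1)$;
\item $\neg P_{i,k+1}$ for $2\le i\le k$ via $A(i,k+1,1)$, using $\neg P_{i,1}$. For $i=2$ this uses the level-$0$ literal.
\end{itemize}
Since $k\le n-4$, we have $k+1\le n-3$. Hence the Descending Cascade block for column $j=k+1$ contains $C(k+1,k)=P_{1,k+1}\lor\dots\lor P_{k,k+1}$, whose literals are now all false. This gives the conflict.

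\emph{1UIP.} At the conflict level, every implied literal is reached from $P_{k+1,1}$. The only exception is $\neg P_{k,k+1}$, which also depends directly on the decision $\neg P_{k,1}$. Hence the first UIP is the decision literal $\neg P_{k,1}$ itself, exactly as in Figure~\ref{fig:dcas3}. The reason side of the cut consists of the literals $\neg P_{i,1}$ for $2\le i\le k-1$. These arrive through $C(1,k)$ and through the clauses $A(i,k+1,1)$. Following the same convention as the Pivot Lemma, the level-$0$ literal $\neg P_{2,1}$ is kept in the learned clause. The learned clause is therefore $P_{2,1}\lor\dots\lor P_{k,1}=C(1,k-1)$. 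Every variable of $C(1,k-1)$ lies in $C(1,k)$, and no clause was learned in between, so the \textbf{Equal Score Invariant} holds by Case 2.

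The main obstacle is the middle step: certifying that the decision sequence reaches exactly this conflict without an earlier propagation or conflict. This requires a systematic scan of which clauses, original or learned, can become unit. They must all be confined to column $1$ and to the columns reached from $P_{k+1,1}$. I would organize this scan by column, using the fact that only column-1 variables are touched before the final decision. I would then handle the boundary case $k=3$ separately, where there is a single decision $\neg P_{3,1}$ and the UIP is that decision.
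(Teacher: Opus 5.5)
Your proposal is correct and is essentially the paper's own argument. The matching steps are: the level-$0$ propagation $P_{1,2}\Rightarrow\neg P_{2,1}$; the Focus-Lemma/tie-break decisions $\neg P_{3,1},\dots,\neg P_{k,1}$; $C(1,k)$ becoming unit and forcing $P_{k+1,1}$, which falsifies $C(k+1,k)$; the decision $\neg P_{k,1}$ as 1UIP; the learned clause $C(1,k-1)$ with the level-$0$ literal kept; Equal Score Case 2; and a completeness scan corresponding to the paper's appendix. The only visible difference is that Figure~\ref{fig:tail} also records the side implication $P_{k+1,2}$ via $A(k+1,1,2)$, which you omit. This does not matter, because $P_{k+1,2}$ is not an ancestor of the conflict and so does not affect the learned clause.
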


\begin{figure}[H]
\centering
\begin{minipage}[t]{0.25\textwidth}
\vspace{0pt}
\begin{enumerate}
\item[$1:$] $A(k + 1, 1, 2)$
\item[$2:$] $B(1, 2)$
\item[$3:$] $A(2, 1, k + 1)$
\item[$4:$] $C(1, k)$
\item[$5:$] $A(3, k + 1, 1)$
\item[$6:$] $C(1, k)$
\item[$7:$] $C(1, k)$
\item[$8:$] $A(k, k + 1, 1)$
\item[$9:$] $A(k + 1, 1, 2)$
\item[$10:$] $A(2, 1, k + 1)$
\item[$11:$] $A(3, k + 1, 1)$
\item[$12:$] $A(k, k + 1, 1)$
\item[$13:$] $B(1, k + 1)$
\item[$14:$] $C(k, k + 1)$
\item[$15:$] $C(k, k + 1)$
\item[$16:$] $C(k, k + 1)$
\item[$17:$] $C(k, k + 1)$
\end{enumerate}
\end{minipage}\hfill
\begin{minipage}[t]{0.75\textwidth}
\vspace{0pt}
\resizebox{1\textwidth}{!}{%
\begin{circuitikz}
\IGsetup

\node[lit] (0)   at (1.75,12) {$P_{1,2}@0$};
\node[lit] (4)  at (5.75,12) {$\neg P_{2,1}@0$};
\node[lit] (1)  at (8.75,12) {$\neg P_{3,1}@1$};
\node[dot] (2)   at (11.75,12) {$\dots$};
\node[lit, draw=red, dashed] (3)  at (14.75,12) {$\neg P_{k,1}@k\!-\!2$};

\node[lit] (5)  at (9.25,8) {$P_{k + 1,1}@k-2$};

\node[lit] (6)  at (3.75,4) {$\neg P_{2,k+1}@k\!-\!2$};
\node[lit] (7) at (6.75, 4) {$\neg P_{3,k+1}@k\!-\!2$};
\node[dot] (8)  at (9.75,4) {$\dots$};
\node[lit] (9)   at (12.25,4) {$\neg P_{k, k + 1}@k\!-\!2$};
\node[lit] (10)  at (15.75,4) {$\neg P_{1,k+1}@k\!-\!2$};

\node[lit] (11) at (0, 4) {$P_{k + 1, 2}@k - 2$};

\node[lit] (conf) at (9.00, 0) {CONFLICT};

\draw[arr] (0) -- node[midway, left] {$1$} (11);
\draw[arr] (0) -- node[midway, above] {$2$} (4);
\draw[arr] (1) -- node[midway, left] {$6$} (5);
\draw[arr] (1) -- node[midway, left] {$5$} (7);
\draw[arr] (3) -- node[midway, left] {$7$} (5);
\draw[arr] (3) -- node[midway, left] {$8$} (9);
\draw[arr] (5) -- node[midway, below] {$9$} (11);
\draw[arr] (4) -- node[midway, left] {$4$} (5);
\draw[arr] (4) -- node[midway, left] {$3$} (6);
\draw[arr] (5) -- node[midway, left] {$10$} (6);
\draw[arr] (5) -- node[midway, left] {$11$} (7);
\draw[arr] (5) -- node[midway, left] {$12$} (9);
\draw[arr] (5) -- node[midway, left] {$13$} (10);
\draw[arr] (6) -- node[midway, left] {$14$} (conf);
\draw[arr] (7) -- node[midway, left] {$15$} (conf);
\draw[arr] (9) -- node[midway, left] {$16$} (conf);
\draw[arr] (10) -- node[midway, left] {$17$} (conf);

\draw[red, dashed, line width=0.8pt] (0.75,10.25) -- (15.50,10.25);

\end{circuitikz}
}%
\end{minipage}

\caption{Tail Lemma Implication Graph}
\label{fig:tail}
\end{figure}

\noindent The \textbf{Equal Score Invariant} holds by Case 2 for the Tail Lemma. 

The lemmas are enough to show that we learn the \textbf{Head}, \textbf{Descending Cascade}, and \textbf{Tail} clauses. To complete the proof of the theorem, we show that the CDCL solver derives UNSAT upon learning these clauses. As part of the \textbf{Descending Cascade}, the unit clause $P_{1, 2}$ is learnt. Propagating it, we derive $\neg P_{2, 1}$. Then, the clause $P_{2, 1} \lor P_{3, 1}$, which is in the \textbf{Tail}, becomes the unit clause $P_{3, 1}$. Then the transitivity clause $A(2, 3, 1)$ becomes the unit clause $\neg P_{2, 3}$. And, the antisymmetry clause $B(1, 3)$ becomes the unit clause $\neg P_{1, 3}$. We then see that the clause $P_{1, 3} \lor P_{2, 3}$, which is part of the \textbf{Descending Cascade}, is completely falsified. However, we did not have to make any decisions to arrive at this conflict, and therefore we derive UNSAT.
\end{proof}

A natural question is whether we crucially need tie-breaking throughout the proof. To answer this question we might try to generalize the proof in a way so that the order in which we branch does not matter, as long as our branching is column-focused (i.e. always branch in the same column as the previous decision if possible). This would result in perhaps processing columns in a different order but the proof would, plausibly, be equivalent up to symmetries. More precisely, we can assume that the first learned clause is $C(1, n - 2)$, i.e. that we guarantee that we start off column-focused; then, we ask if the Focus Lemma and the rest of the configuration would result in the solver learning the essentially the same proof without further invoking the tie-breaking rule. The answer is negative due to the Descending Lemma, the mechanism by which we transition columns in the Descending Cascade, which requires that we initially propagate $P_{l - 1, l}$ from the first $l - 2$ decisions if we want to learn a clause in column $l - 1$. So, the previous learned clauses cannot consist of arbitrary subsets of variables from column $l$ if we want to descend to column $l - 1$.
This suggests that tie-breaking does play an important role in this particular proof, and that further generalizations that removed tie-breaking would be more complex if not longer.

\newpage

\begin{theorem}
The total number of conflicts for the deterministic configuration of CDCL SAT solvers in Section~\ref{sec:configuration} to derive UNSAT for an $OP_n$ instance is exactly $\binom{n}{2} - 3$, for $n \geq 6$.
\end{theorem}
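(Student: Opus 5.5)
Since every conflict produces exactly one learned clause (with restarts after each conflict and no clause deletion), the number of conflicts equals the number of learned clauses, plus possibly one final top-level conflict. By Theorem~\ref{theorem:clause_structure}, the learned clauses are exactly those listed there, in that order. So the plan is to count the Head, Descending Cascade and Tail clauses, and then decide whether the final UNSAT derivation counts as a conflict.

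First, the Head contributes $2$ clauses. For each $j$ from $n-2$ down to $2$, the Descending Cascade contributes the block $C(j,n-2),\dots,C(j,j-1)$, which has $(n-2)-(j-1)+1 = n-j$ clauses. Summing over $j=2,\dots,n-2$ gives
\[
\sum_{j=2}^{n-2}(n-j)=\sum_{m=2}^{n-2} m=\frac{(n-2)(n-1)}{2}-1 .
\]
The Tail $C(1,n-4),\dots,C(1,2)$ contributes $n-5$ clauses; this is nonnegative for $n\ge 6$ and matches the examples, with one clause for $OP_6$ and two for $OP_7$. The total number of learned clauses is therefore
\[
2+\frac{(n-1)(n-2)}{2}-1+n-5=\frac{n^2-3n+2+2n-8}{2}=\frac{n^2-n-6}{2}=\binom{n}{2}-3 .
\]
As a sanity check, $n=6$ gives $12$ and $n=7$ gives $18$, which agree with the displayed lists.

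The remaining step, which I expect to be the main obstacle, is the bookkeeping convention for the final conflict. The end of the proof of Theorem~\ref{theorem:clause_structure} shows that after the last Tail clause $P_{2,1}\lor P_{3,1}$ is learned, unit propagation at decision level $0$ falsifies $P_{1,3}\lor P_{2,3}$, and the solver reports UNSAT. I would argue that this level-$0$ falsification is not a conflict in the counted sense, because no conflict analysis is run and nothing is learned, so VSIDS is not updated (the paper's notion of conflict $t$ is tied to score updates). With that convention the count is exactly $\binom{n}{2}-3$. Exactness also relies on Theorem~\ref{theorem:clause_structure}'s claim that the listed clauses are exactly the learned ones, so no extra conflicts occur in between.

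If instead the final level-$0$ conflict is counted, the same computation gives $\binom{n}{2}-2$. I would state the convention explicitly so that it matches the theorem.
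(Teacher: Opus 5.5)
Your proposal is correct and follows the paper's proof. Like the paper, you count the learned clauses from Theorem~\ref{theorem:clause_structure}: $2$ Head clauses, $\sum_{j=2}^{n-2}(n-j)=n(n-3)/2$ Descending Cascade clauses (your $\frac{(n-1)(n-2)}{2}-1$ is the same quantity), and $n-5$ Tail clauses, totalling $\frac{(n-3)(n+2)}{2}=\binom{n}{2}-3$. The final level-$0$ falsification is a fair point to raise, and the paper implicitly adopts the convention you propose, counting one conflict per learned clause and not the terminal level-$0$ conflict.
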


\begin{proof}
    The \textbf{Head} has $2$ clauses, the \textbf{Descending Cascade} has $\sum_{j = 2}^{n - 2} n - j = n (n - 3) / 2$ clauses, and the \textbf{Tail} has $n - 5$ clauses. In total, then, there are $\frac{(n - 3) (n + 2)}{2} = \binom{n}{2} - 3$ clauses. 
\end{proof}

\begin{corollary}
There exists a deterministic CDCL solver that has an exponential separation with any solver that is polynomially equivalent to tree-like resolution (e.g. DPLL with nondeterministic branching). 
\end{corollary}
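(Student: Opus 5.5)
The plan is to combine the upper bound just proved with the known tree-like lower bound. The solver is the configuration of Section~\ref{sec:configuration}: VSIDS with decay factor $\delta \le 1/2$, fixed negative phase, a restart after every conflict, no clause deletion and 1UIP learning, preceded by the pre-processor of Section~\ref{sec:preprocessing}. By the previous theorem, on $OP_n$ with $n \ge 6$ this solver derives UNSAT after exactly $\binom{n}{2}-3$ conflicts.

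\textbf{Step 1: polynomial running time.} I would bound the work between consecutive conflicts. After each restart the trail is rebuilt by at most $N = n(n-1)$ decisions. Each decision is followed by unit propagation over the clause database. That database consists of the $O(n^3)$ original clauses of $OP_n$ and at most $\binom{n}{2}$ learned clauses, each of size at most $n-1$. Propagation to a fixpoint and 1UIP analysis are therefore polynomial in $n$. Maintaining VSIDS scores and breaking ties costs polynomial time per conflict, and the pre-processor runs in polynomial time. The total time is thus $\mathrm{poly}(n)$, and since the number of variables is $N=\Theta(n^2)$ it is also polynomial in $N$.

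\textbf{Step 2: the lower bound.} By the result of Bonet and Galesi~\cite{bonet-galesi}, every tree-like resolution refutation of $OP_n$ has size $2^{\Omega(n)}$, which is $2^{\Omega(\sqrt{N})}$. Let $S$ be any solver polynomially equivalent to tree-like resolution, meaning its run on an unsatisfiable formula yields a tree-like refutation of polynomially related size; DPLL with nondeterministic branching is the standard example~\cite{simres}. Then $S$ needs $2^{\Omega(n)}$ steps on $OP_n$. This holds even for the best nondeterministic choices, and so for every deterministic DPLL configuration as well.

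\textbf{Step 3: the representation issue.} Because of the pre-processor, the upper bound holds for every DIMACS encoding of $OP_n$. The lower bound concerns the abstract formula and is therefore encoding-independent. Comparing $\mathrm{poly}(n)$ with $2^{\Omega(n)}$ gives the exponential separation.

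The main obstacle is making Step 1 rigorous rather than schematic. In particular, one must check that the conflict count from the previous theorem really bounds all of the solver's work, so that no run of propagation without conflicts can blow up. This is immediate, because each run has at most $N$ decisions before a conflict or a full assignment, and the formula is unsatisfiable.
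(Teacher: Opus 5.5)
Your proposal is correct and follows the same route as the paper: combine the exact $\binom{n}{2}-3$ conflict count from the preceding theorem with the Bonet--Galesi exponential tree-like resolution lower bound for $OP_n$. The paper simply asserts that polynomially many conflicts imply polynomial running time, so your Step~1 (bounding decisions, propagation and 1UIP analysis per conflict) and your explicit treatment of the DIMACS encoding in Step~3 elaborate that same argument rather than take a different approach.
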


\begin{proof}
It is known that tree-like resolution has an exponential lower bound for proof size for Ordering Principle instances~\cite{bonet-galesi}. We have shown the CDCL configuration described in Section~\ref{sec:configuration} requires exactly $\binom{n}{2} - 3$ conflicts to derive UNSAT for $OP_n$ instances. It is well known that a polynomial number of conflicts implies polynomial runtime for CDCL SAT solvers. It then follows that this CDCL solver has an exponential separation with any solver that is polynomially equivalent to tree-like resolution.
\end{proof}

\begin{remark}
A separation in the opposite direction also holds: there exists a class of formulas which can be solved efficiently by nondeterministic DPLL but not by VSIDS-like CDCL. This follows by noticing that the pitfall formulas that separate VSIDS-like CDCL from resolution~\cite{pitfall} have short proofs in tree-like resolution.
\end{remark}

\section{Correctness}

For proofs of correctness for the implication graphs, see the Appendix.

We also experimentally verified our results using the ``textbooksat'' SAT solver ~\cite{vinyals_textbooksat} on instances up to $OP_{30}$, analyzing the exact learned clause structures and propagations on the same solver configuration. Our results were confirmed on all of those instances.

\section{Conclusions}
In this paper, we establish an exponential separation between a deterministic configuration of CDCL SAT solvers and any configuration of DPLL SAT solvers. Specifically, we show that the proposed CDCL configuration solves the OP class of formulas in polynomial time, even though tree-like resolution is known to admit only exponential-size proofs for these instances.

We hope this work proves useful for the future analysis of practical, deterministic SAT solvers. Our argument is considerably simplified by two key invariants (the Focus Lemma and the Equal Score Invariant) and we believe that identifying such invariants is essential for making deterministic analysis tractable in general. Moreover, treating the input to a deterministic solver as a DIMACS representation, rather than as an abstract propositional formula, helps bridge the gap between theory and practice: it allows structural information about the formula that is available in empirical settings to be specified precisely and reasoned about formally. Pairing this perspective with deterministic polynomial time pre-processing offers a natural way to recover invariance to the choice of DIMACS representation, and we expect it to be a useful tool in subsequent work.

\newpage 

\bibliography{references} 

\newpage

\section{Appendix}
To complete the proof of Theorem \ref{theorem:clause_structure}, we show that each of the implication graphs corresponding to the lemmas is \textit{complete}.

\begin{definition}[\bf Completeness]
We say that an implication graph is complete if it is impossible for a different conflict to be learned under the same decisions. 
\end{definition}

We use the following conventions. A literal is \textit{derived} if it is either \textit{decided} (the solver branches on the variable and assigns it a value) or if it is \textit{implied} by other literals. A literal $l$ can only be \textit{implied} when, for some clause $C$ s.t. $l \in C$, all literals except $l$ have their negation derived. In that case, all literals in $C \setminus l$ together imply $l$, and we say that the clause $C$ became \textit{unit} and $l$ is the \textit{unit literal}. When a literal $x$ implies a literal $y$ via clause $C$, we say that $x$ \textit{propagates} $y$ because of $C$. When a literal $x$ is \textit{propagated to saturation}, it means that for each clause that contains $x$ and is unit, we derive the unit literal. The \textit{conflict clause} is the clause that becomes completely falsified in the implication graph, and the \textit{learned clause} is the clause learned by the solver following conflict analysis (in these proofs, 1UIP). We use these properties of BCP:

\begin{observation} 
A literal is propagated to saturation before any of the literals that it implies are propagated.
\end{observation}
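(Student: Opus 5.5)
The plan is to derive this Observation directly from how unit propagation is implemented in the solver model we fix, a standard trail-based BCP with a propagation queue as in MiniSAT-style solvers. I would first make the model explicit. The assignment trail is a sequence of literals, and a pointer \emph{qhead} marks the next literal to be processed. ``Propagating'' a literal $x$ means visiting every clause in which $\neg x$ occurs (equivalently, every clause containing $x$'s complementary occurrence that may have become unit). For each such clause that is now unit, the unit literal is appended to the end of the trail with that clause as its reason. The pointer advances past $x$ only once every such clause has been visited. This matches the paper's definition of ``propagated to saturation''. If a clause becomes falsified, propagation stops and conflict analysis begins, so the Observation only concerns literals that are actually propagated before that point.

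With this model the argument is an ordering argument on trail positions. Suppose $y$ is implied by $x$, meaning $x$ (more precisely, the assignment of $x$) is one of the antecedents of $y$ in the clause that became unit. The key claim is that $y$ is appended to the trail strictly after $x$. I would show this by induction on trail positions. Every clause that becomes unit does so at the moment the last of its other literals is falsified, which happens while processing some trail literal. If $x$ is an antecedent of $y$, then $x$ was assigned before the reason clause of $y$ became unit, and hence before $y$ was appended. Since the queue is FIFO along the trail, $y$ is processed only after qhead has passed $x$, that is, after $x$ has been propagated to saturation.

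The main subtlety, and the step I expect to require care, is the interaction between ``implies'' and saturation. A literal $y$ may become unit while processing some other literal $z$ placed after $x$; in that case $x$ is still an antecedent of $y$, but it was not the trigger. The induction handles this, because $z$ itself follows $x$ on the trail. I would also note two further points. Decisions are only made when the queue is empty, so no decision interleaves with an unfinished propagation. Literals already assigned when a clause becomes unit are not re-added. Hence every literal processed after qhead passes $x$ sits at a later trail position, which gives the Observation.
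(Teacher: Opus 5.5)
Your argument is correct and is essentially the paper's own justification: the paper gives no proof and lists this as a basic property of BCP, which holds under the FIFO trail discipline you make explicit, because every antecedent $x$ is assigned (hence placed on the trail) before the literal $y$ it implies is appended. One small remark is that the induction and the case analysis on the triggering literal $z$ are unnecessary, and $z$ may even precede $x$ on the trail; your direct observation that $x$ is assigned before $y$ is appended already covers every case.
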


\begin{observation}
There are no further propagations once a conflict has been identified. 
\end{observation}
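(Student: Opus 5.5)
This observation is a statement about the operational semantics of unit propagation (BCP) in the configuration of Section~\ref{sec:configuration}, so I would prove it directly from the structure of the CDCL main loop rather than from any property of $OP_n$. The plan is to make the BCP procedure explicit and read the observation off its control flow. BCP keeps a queue of derived literals that have not yet been processed. It repeatedly pops a literal $x$ and scans the clauses containing $\neg x$. For each such clause it does one of three things: nothing if the clause is satisfied or still has two or more unassigned literals; it derives the unit literal if the clause has become unit; and it \emph{returns} that clause as the conflict clause if all of its literals are falsified.

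The key step is that returning a conflict clause exits the propagation loop immediately. Any literals still on the queue, or still pending in the current scan, are never processed, so no further implied literal is added to the trail. Control then passes to conflict analysis. I would argue that 1UIP analysis reads only the trail and the antecedent clauses recorded up to that moment. Hence the implication graph used for learning is exactly the graph present at the instant the falsified clause was detected. No later propagation can add nodes or edges to it.

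Finally, I would use the restart-after-every-conflict policy. After the learned clause is added to $\Gamma$ and the VSIDS scores are bumped and decayed, the solver backjumps to level $0$ and discards the whole trail. So there is also no later point at which the unprocessed queue is resumed; the next propagation phase starts afresh from the level-$0$ units and the new decisions. Together with the previous Observation, this pins down exactly which part of an implication graph is built before the conflict.

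The only real subtlety, and the step I expect to need care, is that the observation is partly a convention: which falsified clause is found first can depend on the order in which BCP visits clauses. I would handle this by stating explicitly that the observation fixes the semantics of stopping at the first detected conflict, and that any remaining order-dependence is exactly what the completeness arguments for each implication graph in this appendix are meant to rule out. The observation itself then requires no combinatorial argument.
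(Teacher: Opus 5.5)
Your proposal is correct and matches the paper's treatment. The paper states this as a built-in property of BCP and gives no proof, and your argument spells out the control flow it takes for granted: propagation halts at the first falsified clause, and neither 1UIP analysis nor the subsequent restart ever resumes the pending queue.
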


We first define some general cases where clauses can become unit. 
\begin{observation}[\bf Transitivity Conditions]
Recall that transitivity clauses are of the form $A(i, j, k) = \neg P_{i, j} \lor \neg P_{j, k} \lor P_{i, k}$. There are 3 cases where the clause could become unit: 
\begin{enumerate}
    \item if you derive $P_{i, j}$ and $P_{j, k}$
    \item if you derive $P_{j, k}$ and $\neg P_{i, k}$
    \item if you derive $P_{i, j}$ and $\neg P_{i, k}$
\end{enumerate}

Additionally, two literals can only (together) cause a single Transitivity clause to become unit (follows simply from the fact that Transitivity clauses only have 3 literals, and no two Transitivity clauses share more than 1 literal). 
\end{observation}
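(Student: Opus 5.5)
The plan is to argue directly from the syntactic shape of a transitivity clause $A(i,j,k) = \neg P_{i,j} \lor \neg P_{j,k} \lor P_{i,k}$, where $i,j,k$ are pairwise distinct. A clause becomes unit exactly when all but one of its literals are falsified and the remaining literal is unassigned. For a three-literal clause, this means exactly two literals are falsified.

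For the first part, enumerate which two of the three literals are falsified.
\begin{enumerate}
    \item Falsifying $\neg P_{i,j}$ and $\neg P_{j,k}$ means deriving $P_{i,j}$ and $P_{j,k}$, which gives case 1 with unit literal $P_{i,k}$.
    \item Falsifying $\neg P_{j,k}$ and $P_{i,k}$ means deriving $P_{j,k}$ and $\neg P_{i,k}$, which gives case 2 with unit literal $\neg P_{i,j}$.
    \item Falsifying $\neg P_{i,j}$ and $P_{i,k}$ means deriving $P_{i,j}$ and $\neg P_{i,k}$, which gives case 3 with unit literal $\neg P_{j,k}$.
\end{enumerate}
There are only $\binom{3}{2}=3$ such pairs, so the list is exhaustive.

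For the second part, note that if two derived literals make a transitivity clause unit, then their negations are two literals of that clause. So it suffices to show that two distinct transitivity clauses $A(i,j,k)$ and $A(i',j',k')$ never share two literals, counted with sign. I will do a case split on which literals are shared. Since each clause has exactly one positive literal, the shared pair is either both negative, or one negative together with the positive literal.

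\emph{Both shared literals are negative.} Then $\{P_{i,j},P_{j,k}\}=\{P_{i',j'},P_{j',k'}\}$.
\begin{itemize}
    \item Under the straight matching, $P_{i,j}=P_{i',j'}$ and $P_{j,k}=P_{j',k'}$ give $(i,j,k)=(i',j',k')$, so the clauses are the same.
    \item Under the crossed matching, $P_{i,j}=P_{j',k'}$ and $P_{j,k}=P_{i',j'}$ give $j'=i$ and $j'=k$, hence $i=k$, contradicting distinctness.
\end{itemize}

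\emph{The positive literals are shared, plus one negative literal.} Then $P_{i,k}=P_{i',k'}$, so $i=i'$ and $k=k'$.
\begin{itemize}
    \item If the shared negative literal is $P_{i,j}=P_{i',j'}$ or $P_{j,k}=P_{j',k'}$, we get $j=j'$, so the clauses coincide.
    \item If it is $P_{i,j}=P_{j',k'}$, then $k'=j$ while $k'=k$, so $j=k$, a contradiction.
    \item If it is $P_{j,k}=P_{i',j'}$, then $i'=j$ while $i'=i$, so $i=j$, a contradiction.
\end{itemize}

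The only step needing care is making the case split exhaustive, in particular the crossed matchings of index pairs. Each of these dies by one of the distinctness conditions $i\neq j$, $j\neq k$ or $i\neq k$.
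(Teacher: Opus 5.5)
Your proposal is correct and follows the paper's reasoning. The three cases are the three ways to falsify two of the three literals, and the uniqueness claim reduces, as in the paper, to the fact that two distinct transitivity clauses share at most one signed literal; the paper only asserts this fact, whereas your case split (straight versus crossed matching of the negative literals, and the positive literal plus one negative) verifies it explicitly from the pairwise distinctness of $i,j,k$.
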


\begin{observation} [\bf Antisymmetry Conditions]
Recall that antisymmetry clauses are of the form $B(i, j) = \neg P_{i, j} \lor \neg P_{j, i}$. The only way this clause becomes unit is by deriving a literal with positive polarity. A literal can only cause a single Antisymmetry clause to become unit (follows simply from the fact that Antisymmetry clauses only have 2 literals, and no two Antisymmetry clauses share more than 1 literal). 
\end{observation}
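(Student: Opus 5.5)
The plan is a direct case analysis on the shape of $B(i,j) = \neg P_{i,j} \lor \neg P_{j,i}$, using only the definition of a unit clause given in the conventions above: a clause $C$ becomes unit with unit literal $l$ exactly when the negation of every literal of $C \setminus l$ has been derived while $l$ itself is unassigned.

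For the first claim, I will note that $B(i,j)$ has exactly two literals, both negative. If the unit literal is $\neg P_{j,i}$, then $C \setminus l = \{\neg P_{i,j}\}$, so its negation $P_{i,j}$ must have been derived. Symmetrically, if the unit literal is $\neg P_{i,j}$, then $P_{j,i}$ must have been derived. There are no other literals in the clause, so these two cases are exhaustive. Hence $B(i,j)$ can become unit only through the derivation of a positive literal, namely one of $P_{i,j}$ or $P_{j,i}$. In particular, deriving negative literals, which is all that decisions produce under fixed phase value selection, never makes an antisymmetry clause unit.

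For the second claim, I will show that every variable $P_{i,j}$ with $i \ne j$ occurs in exactly one antisymmetry clause. Reading off the definition of $B_n$, the variable $P_{i,j}$ occurs in $B(a,b)$ iff $(a,b) = (i,j)$ or $(a,b) = (j,i)$. Both index pairs give the same clause, since $B(i,j)$ and $B(j,i)$ are the same disjunction $\neg P_{i,j} \lor \neg P_{j,i}$ up to the order of literals. Consequently two distinct antisymmetry clauses share no variable at all, which is stronger than sharing at most one literal.

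Combining this with the first part, deriving a single literal can make at most one antisymmetry clause unit. If that literal is negative, no antisymmetry clause becomes unit. If it is the positive literal $P_{i,j}$, the only candidate is the unique clause containing its variable, namely $B(i,j)$, whose unit literal is then $\neg P_{j,i}$.

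No step here is a genuine obstacle. The only point needing care is the identification $B(i,j) = B(j,i)$ as a single clause of the CNF. If the clauses were instead treated as a multiset with both index orders listed, "a single antisymmetry clause" would have to be read up to this duplication.
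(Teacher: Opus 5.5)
Your proposal is correct and follows the paper's direct argument. The two-literal, all-negative shape of $B(i,j)$ means it can only become unit after $P_{i,j}$ or $P_{j,i}$ is derived, and limited overlap between antisymmetry clauses gives uniqueness. Your overlap fact, that distinct antisymmetry clauses share no variable, is actually the one needed. The paper's stated reason, that no two antisymmetry clauses share more than one literal, would not by itself rule out a single positive literal $P_{i,j}$ making two clauses that both contain $\neg P_{i,j}$ unit.
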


\begin{observation} [\bf Non-minimality Conditions]
This only becomes unit when there are $n - 2$ literals all sharing the same column and derived with negative polarity.
\end{observation}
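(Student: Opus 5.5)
The plan is to argue directly from the syntactic shape of the clause $D(j)$, in the same way as for the Transitivity and Antisymmetry Conditions. By definition, $D(j) = \bigvee_{i \ne j} P_{i,j} = C(j, n-1)$. Its literals are exactly the $n-1$ variables of $\mathcal{L}_j$, every one of them occurs with positive polarity, and every one of them has column index $j$.

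Next I apply the paper's definition of a clause becoming unit: a clause $C$ becomes unit with unit literal $l$ exactly when the negation of every literal of $C \setminus l$ has been derived and $l$ itself is not yet assigned. For $D(j)$, the set $C \setminus l$ has $(n-1)-1 = n-2$ literals. The negation of each of them is a negative literal $\neg P_{i,j}$ with column index $j$. So $D(j)$ becomes unit exactly when $n-2$ distinct negative literals $\neg P_{i,j}$, all in column $j$, have been derived and the one remaining variable of $\mathcal{L}_j$ is still unassigned. In that case the propagated literal is the remaining $P_{i',j}$, which is positive and also lies in column $j$.

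I will also record the contrapositive, since the appendix uses it to rule out unexpected propagations. No positive literal, and no literal from a column other than $j$, can contribute to making $D(j)$ unit. Having fewer than $n-2$ falsified literals leaves at least two unassigned literals, so the clause is not unit. Having all $n-1$ falsified means the clause is falsified, which is a conflict and not a propagation; by the observation that BCP stops once a conflict is identified, this case yields no further implications. The same counting applies to the learned clauses $C(j,k)$ in $\Gamma$, which become unit only after $k-1$ negative literals from column $j$ are derived. This is the form used as the reason labels $C(l,k)$ and $C(k+1,k)$ in the implication graphs.

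There is no real obstacle, because the argument is pure counting. The one point that needs care is the phrase ``exactly $n-2$'': it must exclude the fully falsified case, which is a conflict rather than a unit clause, and it must exclude $D(j)$ being already satisfied by a positive literal derived earlier in column $j$. In that second case the clause can never become unit, so such a literal cannot be among the $n-2$ required negated literals, and the condition stated in the observation correctly fails.
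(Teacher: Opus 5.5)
Your counting argument is correct and matches the paper, which states this observation without proof and treats it as immediate from the fact that $D(j)$ consists of the $n-1$ positive literals of column $j$ (just as it justifies the Transitivity and Antisymmetry Conditions by clause shape alone). One small wording fix: having fewer than $n-2$ falsified literals leaves at least two \emph{non-falsified} literals, which need not all be unassigned; your final paragraph on the already-satisfied case covers this anyway.
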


\begin{observation} [\bf Head Conditions]
The Head clauses can only become unit when there are $n - 3$ or $n - 4$ literals all with column 1 and derived with negative polarity.
\end{observation}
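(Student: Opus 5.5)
The plan is a direct unfolding of the definitions, in the same style as the Non-minimality Conditions observation just before it. First I recall the exact shape of the two \textbf{Head} clauses from Theorem~\ref{theorem:clause_structure}. By the definition of the prefix clause,
\[
C(1, n-2) = P_{2,1} \lor P_{3,1} \lor \dots \lor P_{n-1,1}
\]
has exactly $n-2$ literals, and $C(1, n-3) = P_{2,1} \lor \dots \lor P_{n-2,1}$ has exactly $n-3$ literals. The structural facts I need about both clauses are these: every literal occurs with positive polarity, every variable has column index $1$, and the variables are pairwise distinct. The last point holds because $\mathcal{L}_1$ lists each variable of column $1$ once.

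Next I apply the convention from the Appendix for when a clause becomes unit. A clause $C$ becomes unit with unit literal $l$ exactly when the negation of every literal in $C \setminus l$ has been derived. Since all literals of a Head clause are positive, each negation is a literal $\neg P_{i,1}$, i.e.\ a column-$1$ literal derived with negative polarity. Because the variables are distinct, these derived literals are distinct. So $C(1,n-2)$ becomes unit only once $n-3$ distinct negative column-$1$ literals are derived, and $C(1,n-3)$ only once $n-4$ are. Their disjunction is exactly the statement. I would also remark that these literals must come from the first $n-2$ (resp.\ $n-3$) entries of $\mathcal{L}_1$, which is a slightly sharper form that is useful when the observation is invoked in the completeness arguments.

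I do not expect a real obstacle. The only point needing care is the scope of ``Head clauses''. They are the learned clauses $C(1,n-2)$ and $C(1,n-3)$ that sit in $\Gamma$, and the argument applies to each of them individually; it concerns neither the input clause $D(1)$ (covered by the Non-minimality Conditions) nor the Tail clauses $C(1,k)$ for $k \le n-4$, which are handled separately. I would also note that the count is the exact threshold for being unit. If all literals are falsified, the clause is a conflict rather than unit, and this is consistent with Observation ``no further propagations once a conflict has been identified''.
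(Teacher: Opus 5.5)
Your proof is correct and matches the paper's intended reasoning. The paper states this observation without proof, relying on the same fact you unfold: $C(1,n-2)$ and $C(1,n-3)$ consist of $n-2$ and $n-3$ distinct positive column-$1$ literals. So becoming unit requires the negations of all but one of them, i.e.\ at least $n-3$ (resp.\ $n-4$) negative column-$1$ literals, to be derived.
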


\begin{observation} [\bf Descending Cascade Conditions]
Descending Cascade clauses can only become unit when literals within the same column (notably not column 1) are decided with negative polarity.
\end{observation}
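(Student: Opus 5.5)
This observation states that a Descending Cascade clause $C(j,k)$, for $2 \le j \le n-2$ and $j-1 \le k \le n-2$, can become unit only when literals of the form $\neg P_{i,j}$ in a single column $j \neq 1$ have been decided. I would prove it in two parts: a purely syntactic step about which literals can falsify $C(j,k)$, and a step about how those literals come to be assigned during the runs described by the lemmas.

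\textbf{Syntactic step.} By the definition of the prefix clause, $C(j,k)$ is the disjunction of the first $k$ variables of $\mathcal{L}_j$. Hence every literal is positive and lies in column $j$. Two facts follow directly.
\begin{itemize}
\item For $C(j,k)$ to become unit, $k-1$ of its literals must be falsified. This means the corresponding $\neg P_{i,j}$ must be derived, all in column $j$ and all with negative polarity.
\item The indexing of the Descending Cascade in Theorem~\ref{theorem:clause_structure} runs $j$ from $n-2$ down to $2$, so $j \neq 1$.
\end{itemize}
These two facts give the ``same column, not column 1, negative polarity'' part of the claim.

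\textbf{Decided rather than implied.} Here I would check every way a negative column-$j$ literal $\neg P_{i,j}$ can arise in the implication graphs of the Cascade, Descending, Pivot and Tail lemmas. A literal $\neg P_{i,j}$ is implied only through one of three mechanisms:
\begin{itemize}
\item an antisymmetry clause $B(i,j)$, after $P_{j,i}$ has been derived;
\item a transitivity clause $A(i,m,j)$, after $P_{i,m}$ and $\neg P_{i,j}$-type conditions, i.e.\ the Transitivity Conditions of the earlier observation;
\item a learned clause, which is impossible here since all learned clauses are positive.
\end{itemize}
The Focus Lemma and the Equal Score Invariant say that after learning a clause in column $j$, the solver decides the column-$j$ literals of that clause in column-major order before any others. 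So in the relevant decision prefix the only assigned literals are decisions in column $j$ and propagations of the form shown in the figures. One then checks that no propagated $P_{j,i}$ or transitivity pattern forces a negative column-$j$ literal of $C(j,k)$ before that literal is itself decided.

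\textbf{Main obstacle.} The difficulty is that the observation is really a statement about reachable trails, not about the clause alone. In the Descending Lemma graph, for example, $\neg P_{l,l-1}$ is implied by $B(l-1,l)$, and $C(l-1,\cdot)$ clauses are not yet learned there, so this is harmless. Similarly, in the Pivot and Tail graphs the unit $P_{1,2}$ implies only column-1 literals via $B(1,2)$ and column-2 positives via transitivity. I would handle this by restricting attention to the trails produced under the lemmas' decision sequences. I would verify, lemma by lemma, that any implied negative literal in a column $j'\ge 2$ sits in a column whose cascade clauses either do not yet exist or have a positive literal already satisfied, so no cascade clause becomes unit through implied literals. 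This case check against each figure is where I expect most of the work.
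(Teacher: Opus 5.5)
Your syntactic step is correct, and it is all the observation really needs. Every literal of $C(j,k)$ is a positive $P_{i,j}$ with $j\in\{2,\dots,n-2\}$. So $C(j,k)$ can become unit (or falsified) only once negative literals in a single column $j\neq 1$ have been \emph{derived}. That is also how the paper applies the observation in the Appendix, where it speaks of columns with literals ``branched on or propagated with negative polarity''. Read with ``derived'' in place of ``decided'', the observation follows from your first step and needs no trail analysis.

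The second half of your plan has a genuine gap: it tries to upgrade this to ``decided'', and that claim is false for the paper's own runs. So the check you propose (``no cascade clause becomes unit through implied literals'') must fail.

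\begin{itemize}
\item Take $n=6$ and the Cascade Lemma case $k=l=3$ (Figure~\ref{fig:dcas3}). Here $\Gamma$ already contains $C(4,4)$ and $C(4,3)=P_{1,4}\vee P_{2,4}\vee P_{3,4}$, and the newest clause is $C(3,3)$. The solver decides $\neg P_{1,3},\neg P_{2,3}$ and propagates $P_{4,3}$. It then derives $\neg P_{1,4},\neg P_{2,4}$ via $A(1,4,3),A(2,4,3)$ and $\neg P_{3,4}$ via $B(3,4)$. No column-4 literal is decided, yet $C(4,3)$ is falsified (it is the conflict clause) and $C(4,4)$ has become unit. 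Your claim that implied negatives land only in columns whose cascade clauses are absent or satisfied fails exactly here, for column $k+1$ in the cases $l\le k<n-2$.
\item In the Pivot and Tail graphs, your remark about $P_{1,2}$ is true of its direct consequences. But it overlooks the propagated $P_{n-2,1}$ (resp.\ $P_{k+1,1}$). That literal implies $\neg P_{i,n-2}$ (resp.\ $\neg P_{i,k+1}$) in a column whose cascade clauses are already learned and unsatisfied. This falsifies $C(n-2,n-3)$ (resp.\ $C(k+1,k)$).
\item In the final level-0 refutation, $C(3,2)=P_{1,3}\vee P_{2,3}$ is falsified and $C(3,3)$ becomes unit with no decisions at all.
\end{itemize}

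Falsifying cascade clauses with implied literals is exactly how most conflicts in the paper's proof arise. Treat ``decided'' in the statement as a slip for ``derived'', drop your second step, and let the syntactic argument stand as the proof.
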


\begin{observation} [\bf Tail Conditions]
Tail clauses can only become unit when literals with column 1 are decided with negative polarity.
\end{observation}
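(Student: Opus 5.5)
The plan is to read the observation as a statement about which partial assignments can make a Tail clause unit, and to combine a syntactic check with the branching invariants already established. A Tail clause is $C(1,k)=P_{2,1}\lor\dots\lor P_{k+1,1}$ for some $2\le k\le n-4$. It is purely positive and lies entirely in column $1$. So it can become unit only when $k-1$ of its variables are assigned false, and every such variable has column index $1$. This first step is immediate from the definition of $C(j,k)$ and the clause form, exactly as in the Head and Non-minimality Conditions.

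The substantive step is to show that, in every run where a Tail clause is present in $\Gamma$, these negative column-$1$ literals are produced by decisions. The one exception is the root-level literal $\neg P_{2,1}@0$, which is forced by the learned unit $P_{1,2}$ through $B(1,2)$ and is harmless, being fixed at level $0$. The argument would go as follows.

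\begin{itemize}
\item By the Focus Lemma together with the Equal Score Invariant, once the Tail has begun the solver branches first on the unassigned variables of the newest Tail clause $C(1,k)$, in column-major order and with value false. So the literals $\neg P_{3,1},\neg P_{4,1},\dots$ appear as decisions.
\item Next I would enumerate the only other mechanisms that could set some $P_{i,1}$ to false: antisymmetry $B(i,1)$ with $P_{1,i}$ true; transitivity $A(i,j,1)$ via Transitivity Condition 2 or 3; and a learned clause becoming unit on a negative literal. Learned clauses are all positive, so the last mechanism cannot occur.
\item Finally I would check that before the conflict the only positive literal of the form $P_{1,i}$ is the unit $P_{1,2}$, which only yields $\neg P_{2,1}$. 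I would also check that no transitivity clause with two appropriately derived literals forcing $\neg P_{i,1}$ arises. This uses the fact that the only positive literals derived at decision levels are $P_{k+1,1}$ and $P_{k+1,2}$, as in the Tail Lemma and Pivot Lemma graphs.
\end{itemize}

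Before the Tail begins, I would observe that Tail clauses are not yet in $\Gamma$, so nothing needs to be shown there.

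The main obstacle is the third point: ruling out an implied $\neg P_{i,1}$ coming from some transitivity clause whose other two literals were derived earlier in the same run. I would handle it by a direct inspection of the positive literals that appear in Figures~\ref{fig:pivot} and~\ref{fig:tail}. Each such literal is $P_{k+1,1}$ or $P_{k+1,2}$. For each one I would check that Transitivity Conditions 2 and 3 can only produce negative literals in columns $k+1$ or $2$, never in column $1$ at a variable outside the decided prefix. Propagation also halts at the first conflict (the second BCP observation), so no further column-$1$ literals are derived afterwards. That completes the case analysis.
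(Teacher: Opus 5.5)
Your first step is essentially all the paper relies on. It states this observation without proof, reading it off the clause shape just like the Head and Non-minimality Conditions: a Tail clause $C(1,k)$ is a positive clause over column-$1$ variables, so it is unit only when all but one of them are false. Read literally, ``decided'' is not even accurate. In every Tail run $\neg P_{2,1}@0$ is implied via $B(1,2)$, and in the final refutation $C(1,2)$ becomes unit at level $0$ with no decisions at all. Your carve-out for $\neg P_{2,1}@0$ covers both cases. So what your ``substantive step'' sets out to prove is an amended, run-specific claim, and the argument you give for it does not hold up.

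There are two concrete problems.

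\textbf{The mechanism analysis misses a case.} A clause $A(i,j,1)$ contains $P_{i,1}$ positively, so the only negative column-$1$ literal it can force is $\neg P_{j,1}$ (Condition~3). The family that forces $\neg P_{i,1}$ via Condition~2 is $A(i,1,m)=\neg P_{i,1}\lor\neg P_{1,m}\lor P_{i,m}$, from $P_{1,m}$ and $\neg P_{i,m}$. Once the unit $P_{1,2}$ is learned, every $A(i,1,2)$ is the binary clause $\neg P_{i,1}\lor P_{i,2}$ at level $0$. Your check, however, only examines positive literals at decision levels ($P_{k+1,1}$, $P_{k+1,2}$) and never looks at $P_{1,2}$. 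Figure~\ref{fig:pivot}, which you cite as support, shows exactly this mechanism in action. The decisions there are $\neg P_{3,2},\dots,\neg P_{n-3,2}$, and all of $\neg P_{3,1},\dots,\neg P_{n-3,1}$ are \emph{implied} via $A(i,1,2)$. This is harmless only because no Tail clause is in $\Gamma$ yet. In the Tail runs your conclusion is true, but the missing verification is that no $P_{i,2}$ is falsified before the conflict.

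\textbf{Reading the figures is circular.} You cannot obtain this fact by reading the derived literals off Figures~\ref{fig:pivot} and~\ref{fig:tail}. Doing so presupposes that those graphs are complete, and the appendix proves completeness of Figure~\ref{fig:tail} using the Tail Conditions themselves. It has to come from a direct BCP trace:
\begin{enumerate}
\item At level $0$ only $P_{1,2}$ and $\neg P_{2,1}$ are assigned.
\item The decisions $\neg P_{3,1},\dots,\neg P_{k,1}$ make no clause unit until $C(1,k)$ yields $P_{k+1,1}$.
\item Saturating $P_{k+1,1}$ produces $\neg P_{1,k+1}$, the literals $\neg P_{i,k+1}$ for $2\le i\le k$, and $P_{k+1,2}$, but no negative column-$1$ literal, and it already falsifies $C(k+1,k)$.
\end{enumerate}
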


\begin{observation} [\bf Saturation Condition]
When a literal $l$ is propagated to saturation and the implied literals cause there to be a conflict, none of the literals that $l$ implied are propagated. This is a direct consequence of the BCP observations. 
\end{observation}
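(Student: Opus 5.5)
The plan is to derive the Saturation Condition directly from the two BCP observations stated just before it, by fixing a queue-based model of unit propagation and following the propagation of a single literal $l$. I would model BCP as a FIFO trail. When a literal is \emph{propagated}, the solver scans every clause containing its negation; each clause that becomes unit has its unit literal assigned and appended to the trail. A conflict is detected either when some clause becomes completely falsified during such a scan, or when the unit literal of a clause is already assigned the opposite value. The second case is the same thing as the first, since that clause is then falsified. Under this model, ``$l$ is propagated to saturation'' means that the scan for $l$ runs over all clauses that become unit because of $l$.

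The argument then has three steps. First, by the first BCP observation (a literal is propagated to saturation before any literal it implies is propagated), none of the literals $l_1,\dots,l_m$ implied by $l$ can begin its own propagation until the scan for $l$ has finished. Second, the hypothesis says that these implied literals together with $l$ cause a conflict. So at some point during or at the end of the scan for $l$, some clause becomes fully falsified: either one whose remaining literals are all falsified, or one whose unit literal contradicts an implied $l_i$. Hence the conflict is identified no later than the moment the scan for $l$ completes, and in particular before any $l_i$ is dequeued. Third, by the second BCP observation, once a conflict is identified no further propagations happen. So no $l_i$ is ever propagated, which is the claim. Note that the $l_i$ remain on the trail and appear as nodes of the implication graph; they simply have no outgoing edges. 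This is exactly how the observation is used when the appendix checks the graphs, where the lower-level literals feeding the conflict have no children.

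The main obstacle is the ambiguity in ``the implied literals cause there to be a conflict.'' A conflict might only become visible through a clause that needs a \emph{second-generation} literal, that is, one implied by some $l_i$ rather than by $l$ directly. In that case the statement as phrased would be false. I would handle this by reading the hypothesis as saying that the conflict clause is falsified by the assignment consisting of the trail before $l$, plus $l$, plus $l_1,\dots,l_m$. That is a purely syntactic check performed during $l$'s scan. It is also the situation that arises in every graph in the appendix, where the conflict clause, a $D(\cdot)$ or a learned $C(\cdot,\cdot)$, is falsified by literals implied directly by the last propagated positive literal. With this reading fixed, the remaining argument is just the two BCP observations applied in order.
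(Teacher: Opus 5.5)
Your overall route is the paper's: the first BCP observation keeps every $l_i$ idle until $l$ is saturated, and the second halts BCP at the conflict. Your reading of the hypothesis is also the one every appendix application uses: the conflict clause is already falsified once $l$ is saturated.

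The gap is Step~2, inside the model you fixed. In that model, propagating $l$ means scanning the clauses that contain $\neg l$, and conflicts are noticed on the clauses a scan visits. But in every application the conflict clause is purely positive (a $D(\cdot)$ or a learned $C(\cdot,\cdot)$) while $l$ is a positive literal. So the conflict clause never contains $\neg l$ and is never visited during $l$'s scan.

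For instance, in Figure~\ref{fig:head1} the scan of $l=P_{n,1}$ visits only $A(i,n,1)$, $A(n,1,k)$ and $B(1,n)$. It assigns $\neg P_{1,n},\dots,\neg P_{n-1,n}$ and ends with no conflict detected. $D(n)$ is now falsified, but in your FIFO model it is first looked at only when some $\neg P_{i,n}$ is dequeued and the clauses containing $P_{i,n}$ are scanned. That is, it is seen only while a literal implied by $l$ is being propagated. Nothing in the two BCP observations prevents that scan from deriving new literals before it reaches $D(n)$. So two of your claims do not follow from your setup: ``hence the conflict is identified no later than the moment the scan for $l$ completes'' and ``a purely syntactic check performed during $l$'s scan''.

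The repair is to state explicitly the idealization that the paper's ``direct consequence'' also relies on without saying so. A conflict counts as \emph{identified} the moment the current assignment falsifies some clause of $OP_n$ or $\Gamma$, regardless of which clause lists have been scanned. This is also how the implication graphs are drawn, with the falsifying literals feeding straight into CONFLICT. With this eager detection, the conflict is identified at the latest when the last of $l_1,\dots,l_m$ is assigned during $l$'s saturation, hence before any $l_i$ is processed. Your Steps~1 and~3 then finish the argument exactly as in the paper.
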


\subsection{First Head Clause Completeness}
See Figure \ref{fig:head1} for the implication graph.
\begin{itemize}
    \item Transitivity: Each decision literal causes a Transitivity clause to become unit when $P_{n, 1}$ is propagated to saturation. By the Transitivity Conditions, since two literals together can only make a single Transitivity clause unit, it follows that literals implied by $P_{n, 1}$ must be propagated in order for any other Transitivity clauses to become unit. However, since propagating $P_{n, 1}$ to saturation results in conflict, it follows by the Saturation Condition that no literal implied by $P_{n, 1}$ is propagated. Thus, no other Transitivity clause can become unit. 
    \item Antisymmetry: The only literal with positive polarity is $P_{n, 1}$, and the corresponding antisymmetry clause that becomes unit is shown. By the Antisymmetry Conditions, it follows that no other Antisymmetry clause could become unit. 
    \item Non-minimality: The occurrences of $n - 2$ literals sharing the same column with negative polarity are shown, and by the Non-minimality Conditions it follows that no other Non-minimality clause could become unit. 
\end{itemize}

\subsection{Second Head Clause Completeness}
See Figure \ref{fig:head2} for the implication graph.
\begin{itemize}
    \item Transitivity: By Case 1 of the Transitivity Conditions, $P_{n - 1, 1}$ and $P_{n, n - 1}$ together imply $P_{n, 1}$; this is omitted from the implication graph because $P_{n, 1}$ cannot be propagated by the Saturation Condition and it does not appear in the conflict clause. By the Transitivity Conditions and the Saturation Condition, it follows that no other Transitivity clause could become unit.
    \item Antisymmetry: The only literals with positive polarity have their corresponding antisymmetry clauses that become unit shown. By the Antisymmetry Conditions, it follows that no other Antisymmetry clause could become unit. 
    \item Non-minimality: The occurrences of $n - 2$ literals sharing the same column with negative polarity are shown, and by the Non-minimality Conditions it follows that no other Non-minimality clause could become unit. 
    \item First Head Clause: This does become unit and is shown.
\end{itemize}

\subsection{First Descending Cascade Clause Completeness}
See Figure \ref{fig:dcas1} for the implication graph.
\begin{itemize}
    \item By Case 1 of the Transitivity Conditions, $P_{n - 2, 1}$ and $P_{n, n - 2}$ together imply $P_{n, 1}$; this is omitted from the implication graph because $P_{n, 1}$ cannot be propagated by the Saturation Condition and it does not appear in the conflict clause. By the Transitivity Conditions and the Saturation Condition, it follows that no other Transitivity clause could become unit. 
    \item Antisymmetry: The only literals with positive polarity have their corresponding antisymmetry clauses that become unit shown. By the Antisymmetry Conditions, it follows that no other Antisymmetry clause could become unit. 
    \item Non-minimality: The occurrences of $n - 2$ literals sharing the same column with negative polarity are shown, and by the Non-minimality Conditions it follows that no other Non-minimality clause could become unit. 
    \item Head clauses: The second Head clause becomes unit and is shown. The first Head clause cannot become unit if the second Head clause becomes unit. 
\end{itemize}

\subsection{Cascade Lemma Completeness}
See Figures \ref{fig:dcas2}, \ref{fig:dcas3}, \ref{fig:dcas4}, \ref{fig:dcas5} for the implication graph.
\begin{itemize}
    \item Transitivity: 
    \begin{itemize}
        \item Case 1 ($k \neq l, k = n - 2$): Shared row/col with positive polarity (Case 1 of the Transitivity Conditions) has one possible occurrence which would propagate the literal $P_{n, l}$ as a result of $A(k + 2, k + 1, l)$; this is omitted from the implication graph due to the Saturation Condition. Case 2 of the Transitivity Conditions has all of its occurrences shown. Case 3 of the Transitivity Conditions does not occur.
        \item Case 2 ($k = l, k \neq n - 2)$: It can easily be seen that there is no other possible transitivity clause that could become unit.
        \item Case 3 ($l < k < n - 2$): It is clear that this is analogous to Case 2.
        \item Case 4 ($k = l = n - 2$: It is clear that this is analogous to Case 1.
    \end{itemize}
    \item Antisymmetry: The only literals with positive polarity have their corresponding antisymmetry clauses that become unit shown. By the Antisymmetry Conditions, it follows that no other Antisymmetry clause could become unit. 
    \item Non-minimality: In all cases this cannot occur by the Non-minimality Conditions. 
    \item Head clauses: We never branch or derive any literals with column $1$ so this cannot occur by the Head Conditions.
    \item Descending Cascade Clauses: 
    \begin{itemize}
        \item Case 1: We derive literals with negative polarity in column $l$, $k + 1$, and $k + 2$. All of the columns have their corresponding Descending Cascade clauses shown to be unit. By the Descending Cascade Conditions, no other Descending Cascade clauses can become unit. 
        \item Case 2: We derive literals with negative polarity in columns $l$ and $k + 1$, and the implication graph shows both of the corresponding Descending Cascade clauses become unit. By the Descending Cascade Conditions, no other Descending Cascade clauses can become unit. 
        \item Case 3: Analogous to Case 2. 
        \item Case 4: Analogous to Case 1.
    \end{itemize}
\end{itemize}

\subsection{Descending Lemma Completeness}
See Figure \ref{fig:dcas6} for the implication graph.

Note that $P_{n, l}$ is not propagated due to the Saturation Condition. 
\begin{itemize}
    \item Transitivity: Shared row/col with positive polarity (Case 1 of the Transitivity Conditions) has one occurrence and is shown. Same column with opposite polarity (Case 2 of the Transitivity Conditions) also has all of its occurrences shown. Same row with opposite polarity (Case 3 of the Transitivity Conditions) does not occur. 
    \item Antisymmetry: The only literals with positive polarity already have their corresponding antisymmetry clauses that become unit shown. By the Antisymmetry Conditions, it follows that no other Antisymmetry clause could become unit. 
    \item Non-minimality: The only occurrences of this are shown, where $D(l - 1)$ becomes unit and $D(n)$ is the conflict clause. By the Non-minimality Conditions, no other Non-minimality clause could become unit. 
    \item Head clauses: We never branch or derive any literals with column $1$ so this cannot occur by the Head Conditions. 
    \item Descending Cascade Clauses: The columns with literals branched on or propagated with negative polarity are $l$, $l - 1$, and $n$. The Non-minimality clauses $D(l - 1)$ and $D(n)$ become unit, so Descending Cascade clauses with literals in those columns cannot become unit. The Descending Cascade clause $C(l, l - 1)$ is shown to become unit. By the Descending Cascade Conditions, no other Descending Cascade clause can become unit.
\end{itemize}

\subsection{Pivot Lemma Completeness}
See Figure \ref{fig:pivot} for the implication graph.

\begin{itemize}
    \item Transitivity: Case 1 of the Transitivity Conditions occurs with $P_{1, 2}$ and $P_{n - 2, 1}$ to imply $P_{n - 2, 2}$, but is omitted from the implication graph since $P_{n - 2, 2}$ will not be propagated due to the Saturation Condition and does not appear in the conflict clause. By the Transitivity Conditions and the Saturation Condition, no other Transitivity clause can become unit. 
    \item Antisymmetry: The only literals with positive polarity already have their corresponding antisymmetry clauses that become unit shown. By the Antisymmetry Conditions, it follows that no other Antisymmetry clause could become unit. 
    \item Non-minimality: By the Non-minimality Conditions, no Non-minimality clauses can become unit. 
    \item Head clauses: The second Head clause becomes unit and is shown. The first Head clause cannot become unit. 
    \item Descending Cascade Clauses: The columns with literals branched on or propagated with negative polarity are $2$, $1$, and $n - 2$. We don't consider $1$ for the Descending Cascade; for column 2, since we already learned the clause $P_{1, 2}$, all of the Descending Cascade Clauses with column 2 are already satisfied and cannot become unit. We falsify a Descending Cascade clause associated with column $n - 2$ which is shown. By the Descending Cascade Conditions, no other Descending Cascade clauses can become unit. 
\end{itemize}

\subsection{Tail Lemma Completeness}
See Figure \ref{fig:tail} for the implication graph.

\begin{itemize}
    \item Transitivity:  Case 1 of the Transitivity Conditions is only possible when we see literals such that one's column is the same as the other's row and they are derived with the same polarity, which happens with $P_{1, 2}$ and $P_{k + 1, 1}$, from which we get $P_{k + 1, 2}$ as shown. Case 2 of the Transitivity Conditions cannot occur apart from what is shown in the implication graph, because we never derive any other literals with the same column and opposite polarity. Finally, Case 3 of the Transitivity Conditions is only possible when we see literals such that they have the same row and they are derived with opposite polarity, which happens with $P_{1, 2}$ and $\neg P_{1, k + 1}$; however, $\neg P_{1, k + 1}$ cannot be propagated due to the Saturation Condition.
    \item Antisymmetry: The implication graph already shows all occurrences of this; $P_{k + 1, 2}$ is not propagated by the Saturation Condition. By the Antisymmetry Conditions, no other Antisymmetry clauses can become unit. 
    \item Non-minimality: No Non-minimality clauses can become unit by the Non-minimality Conditions. 
    \item Head: No Head clauses can become unit by the Head Conditions. 
    \item Descending Cascade: The only candidate is column $k + 1$, which is the conflict clause in the implication graph. No other Descending Cascade clauses can become unit by the Descending Cascade Conditions. 
    \item Tail: The only tail clause which becomes unit is shown in the implication graph; no other tail clauses can become unit by the Tail Conditions. 
\end{itemize}

\end{document}